\documentclass[a4paper,UKenglish,cleveref,autoref,thm-restate]{lipics-v2021}

\usepackage{amsmath, amsthm, amsfonts, amssymb}
\usepackage{complexity}
\usepackage{graphicx} 
\usepackage{caption}
\usepackage{subcaption}
\usepackage{xcolor}
\usepackage{nicefrac}
\usepackage{bm}

\graphicspath{figures/}

\newtheorem{problem}{Problem}

\newcommand{\Pol}{\ensuremath{P}}
\newcommand{\PolQ}{\ensuremath{Q}}
\newcommand{\defeq}{\stackrel{{\rm def}}{=}}

\DeclareMathOperator*{\argmin}{arg\,min}

\def\OPT{{\rm OPT}}

\def\H{{\rm H}}

\crefname{problem}{Problem}{Problems}
\crefname{figure}{Figure}{Figures}
\crefname{theorem}{Theorem}{Theorems}
\crefname{lemma}{Lemma}{Lemmas}
\crefname{corollary}{Corollary}{Corollaries}
\crefname{section}{Section}{Sections}
\crefname{appendix}{Appendix}{Appendices}
\crefname{remark}{Remark}{Remarks}
\crefname{claim}{Claim}{Claims}
\crefname{conjecture}{Conjecture}{Conjectures}

\title{Segment Watchman Routes}

\titlerunning{Segment Watchman Routes}

\author{Anna Br\"otzner}{Faculty of Technology and Society, Malm\"o University, Sweden\and\url{https://mau.se/en/persons/anna.brotzner/}}{anna.brotzner@mau.se}{https://orcid.org/0000-0002-2161-6571}{}

\author{Omrit Filtser}{Department of Mathematics and Computer Science, The Open University of Israel, Israel\and\url{https://omrit.filtser.com/}}{omrit.filtser@gmail.com}{https://orcid.org/0000-0002-3978-1428}{}

\author{Bengt~J. Nilsson}{Faculty of Technology and Society, Malm\"o University, Sweden\and\url{https://mau.se/en/persons/bengt.nilsson.ts/}}{bengt.nilsson.TS@mau.se}{https://orcid.org/0000-0002-1342-8618}{}

\author{Christian Rieck}{Institute of Mathematics, University of Kassel, Germany\and\url{https://christianrieck.github.io/}}{christian.rieck@mathematik.uni-kassel.de}{https://orcid.org/0000-0003-0846-5163}{}

\author{Christiane Schmidt}{Department of Science and Technology, Link\"oping University, Sweden\and \url{https://www.itn.liu.se/~chrsc91/}}{christiane.schmidt@liu.se}{https://orcid.org/0000-0003-2548-5756}{}

\authorrunning{A. Brötzner, O. Filtser, B.\,J. Nilsson,  C. Rieck, and C. Schmidt}

\Copyright{Anna Brötzner, Omrit Filtser, Bengt J.~Nilsson, Christian Rieck, and Christiane Schmidt}

\ccsdesc[500]{Theory of computation~Computational geometry}
\ccsdesc[500]{Theory of computation~Approximation algorithms analysis}
\ccsdesc[500]{Theory of computation~Problems, reductions and completeness}

\keywords{Watchman routes, segment guarding, $k$-hull guarding, \NP-hardness, approximation}

\funding{
	Anna Brötzner, Bengt J.~Nilsson, and Christiane Schmidt were supported by grant 2021-03810 (Illuminate: provably good algorithms for guarding problems) from the Swedish Research Council (Vetenskapsr\r{a}det). 
	Anna Brötzner and Bengt J.~Nilsson were also funded by grant 20250562 (Illuminate and Move: provably good algorithms for exploration problems) from the Crafoord Foundation. 
	Omrit~Filtser was supported by the Israel Science Foundation (grant No. 2135/24).
	Christian~Rieck is funded by the Deutsche Forschungsgemeinschaft (DFG, German Research Foundation), grant 522790373.
	}

\nolinenumbers

\hideLIPIcs

\begin{document}

\maketitle

\begin{abstract}
Motivated by applications for robust guarding, we consider a variant of the multiple-watchmen problem that ensures that every point within a polygon $\Pol$ is seen from more than one direction: 
we search for two routes $W_1,W_2$, such that every point $p\in\Pol$ is contained in a segment $\overline{w_1w_2}\subseteq P$ such that $w_1\in W_1$ and $w_2\in W_2$. 
We call such routes \emph{segment watchman routes}.

We show that finding the two routes that are optimal with respect to the min-max criterion is weakly \NP-hard even in simple polygons, and that finding the routes that are optimal with respect to the min-sum criterion is \NP-hard in polygons with holes.
Moreover, we present sufficient conditions for routes to be segment watchman routes, and provide a polynomial-time $2$-approximation under both the min-max criterion and the min-sum criterion, both in simple polygons. 
Finally, we show how to generalize our results for $k$ watchmen.
\end{abstract}

\newpage
\section{Introduction}
In the classical \textsc{Watchman Route Problem}, introduced by Chin and Ntafos~\cite{cn-owr-86,Chin88}, we ask for the shortest route inside a given simple polygon \Pol, such that all points of \Pol\ are visible from at least one point on the route. 
Maybe surprisingly, this problem can be solved in polynomial time~\cite{Tan01a,TanJ17}. 
In this context, a point $p\in \Pol$ \emph{sees} another point $q\in\Pol$ if the line segment $\overline{pq}$ is fully contained in \Pol.

Carlsson, Nilsson, and Ntafos~\cite{cnn-ogcmw-93} raised the $m$-watchmen problem as a natural generalization: 
we are given $m$ watchmen (with or without given starting points) for which we aim to find routes, such that each point in \Pol\ is visible from at least one of the $m$ routes. 
Two common objectives for this problem are to minimize the total length of all $m$ watchman routes (called \emph{min-sum}) and to minimize the length of the longest route assigned to any watchman (called \emph{min-max}). 
In this paper, we mainly focus on the case $m=2$.

In the classical setting with $m$ watchmen, the goal is simply to ensure that each point is observed at least once, without any notion of robustness. 
In practice, however, when robots are used as watchmen, such minimal coverage is often insufficient: 
robots in multi-robot systems may fail or encounter obstacles, particularly in remote or hazardous regions, and observing a point from multiple perspectives can significantly enhance the reliability and quality of monitoring. 
To bridge this gap between theory and application, we focus on routes that are robust and effective in real-world scenarios. 
Specifically, in this paper, we aim to improve coverage quality by ensuring that each point is observed from multiple directions. 
This is a novel variant of the $m$-watchmen problem inspired by similar coverage requirements that have been studied for static guards~\cite{ehm-aatol-05,se-tg-03} (see related work below). 
Because each point is seen by each watchman, we will be able to perceive the complete environment even in case of all but one of the robots failing.
This redundancy provides robustness (see, e.g.,~\cite{fdm-mrrac-13}). 
However, without failures, we cover even another robustness aspect: for certain types of laser scanners, the quality of the attained point clouds depends on the incidence angle (i.e., the angle between incoming laser beam and the local surface normal)~\cite{bm-etcsg-18,slmt-sgifq-11,tc-ciade-16}. 
Hence, to be robust against impaired perception with a multi-robot system (our watchmen) equipped with such lasers, we aim to see a point from different directions:
for $m=2$, we require that each point $p\in\Pol$ is seen from two points on the two watchmen's routes that define a segment on which $p$ must lie, that is, $p$ must be seen from opposite sides. 

\subparagraph*{Problem Definition.}
Let \Pol\ be a \emph{simple polygonal domain}.
A point $p\in\Pol$ is \emph{segment-guarded} by two points $w_1,w_2 \in\Pol$ if each of $w_1,w_2$ sees $p$ and the segment $\overline{w_1w_2}$ contains $p$.
We say that two routes $W_1,W_2$ in \Pol\ are \emph{segment watchman routes} for \Pol\ if for every point $p\in\Pol$ there exist two points $w_1\in W_1$, $w_2\in W_2$ such that $p$ is segment-guarded by $w_1$ and $w_2$.

\medskip
We always assume that a route is closed, that is, it starts and ends at the same point. 
For a route $W$ in \Pol, let $|W|$ represent its length. 
We focus on the following two problems:

\begin{problem}[Min-Max Segment Watchmen Route Problem]\label{prob:mm-seg}
    Given a polygonal domain \Pol, find segment watchman routes $W_1,W_2$ such that $\max \{|W_1|, |W_2|\}$ is minimized.
\end{problem}

\begin{problem}[Min-Sum Segment Watchmen Route Problem]\label{prob:ms-seg}
    Given a polygonal domain \Pol, find segment watchman routes $W_1,W_2$ such that $|W_1| + |W_2|$ is minimized.
\end{problem}

In \Cref{fig:non-trivial-min-max-example}, we illustrate a collection of example polygons along with their corresponding segment watchman routes. 
Note that the routes may overlap.
Also note that segment-guarding a point~$p$ by two watchmen on routes $W_1$ and $W_2$ does not require them to occupy positions $w_1 \in W_1$ and $w_2 \in W_2$ simultaneously.

\begin{figure}[htb]
    \captionsetup[subfigure]{justification=centering}%
    \begin{subfigure}{0.33\columnwidth}
    \centering
        \includegraphics[scale=0.8,page=1]{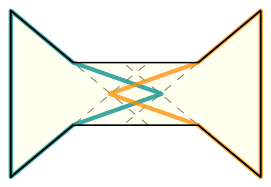}
        \caption{}
        \label{fig:non-trivial-min-max-example-a}
    \end{subfigure}%
    \begin{subfigure}{0.33\columnwidth}
    \centering
        \includegraphics[scale=0.8,page=2]{figures/non-trivial-segment-routes-new.pdf}
        \caption{}
        \label{fig:non-trivial-min-max-example-b}
    \end{subfigure}%
        \begin{subfigure}{0.33\columnwidth}
        \centering
        \includegraphics[scale=0.8,page=3]{figures/non-trivial-segment-routes-new.pdf}
        \caption{}
        \label{fig:non-trivial-min-max-example-c}
    \end{subfigure}%
    \caption{Min-max segment watchman routes (a) may or (b) may not need to overlap. 
    Moreover, (c) segment watchman routes do not ``sweep'' the polygon, and therefore are not required to see a point at the same time.
    (Note again that routes are closed, hence, if we depict a segment only, this means that the route goes back and forth along the segment.)}
    \label{fig:non-trivial-min-max-example}
\end{figure}

\subparagraph*{Our Contribution.}
We introduce the \textsc{Segment Watchman Routes Problem}, and study it in polygons with and without holes, under both the min-sum and min-max objectives.
We~provide sufficient conditions for $2$-watchmen routes to be segment watchman routes.

On the algorithmic side, we present a $2$-approximation algorithm for both the min-max and min-sum objectives. 
Moreover, we establish \NP-hardness for simple polygons under the \mbox{min-max} measure, and for polygons with holes under the min-sum objective.

Finally, we generalize the concept of being segment-guarded to being \emph{$k$-hull-guarded}: 
A point $p\in\Pol$ is \emph{$k$-hull-guarded} by $k$ points $w_1,w_2,\dots,w_k\in\Pol$ if the convex hull of  $w_1,w_2,\dots,w_k$ contains $p$, and each of $w_1,w_2,\dots,w_k$ sees $p$. 
Similarly to segment watchman routes, we say that $k$ routes $W_1,W_2,\dots,W_k$ in \Pol\ are \emph{$k$-hull watchman routes} for \Pol\ if for every point $p\in\Pol$ there exist $k$ points $w_i\in W_i$ for $i\in\{1,\dots,k\}$ such that~$p$ is $k$-hull-guarded by $w_1,w_2,\dots,w_k$.
We show that most of our results for segment watchman routes carry over to $k$-hull watchman routes. In particular, the strategy of our approximation algorithm yields a $k$-approximation for the $k$-hull watchman routes problem under the min-max criterion, and a $2$-approximation under the min-sum criterion. 
Moreover, the \NP-hardness results also carry over to the respective setting.

\subparagraph*{Related Work.} 
Carlsson, Nilsson, and Ntafos~\cite{cnn-ogcmw-93} showed that the $m$-watchmen problem is \NP-hard in simple polygons and provided a polynomial time algorithm for histograms. 
Polynomial time algorithms for different polygon classes, using either the min-sum or the min-max objective, have also been presented~\cite{bagheri23journal, mw-wrmg-91, ns-smwrh-92, nw-wrsp-90}. 
Nilsson and Packer~\cite{nilssonpacker2024} proposed a 5.969-approximation algorithm to compute min-max $2$-watchman routes in~simple~polygons without given starting points (also called \emph{floating}), and obtained a factor of~$\approx 6.922$ for the case with given starting points (also called \emph{anchored}).
Very recently, Brötzner, Nilsson, and Schmidt~\cite{bns-iatwr-26} improved these approximation factors to $3+\pi/2\approx 4.571$ for the floating, and to~$2+\pi/2 \approx 3.571$ for the anchored version. 

The robustness requirement we employ for watchman routes in this paper is closely related to the problems of two-sensor visibility and triangle guarding for stationary guards introduced by Efrat, Har-Peled, and Mitchell~\cite{ehm-aatol-05} and Smith and Evans~\cite{se-tg-03}, respectively. 
Both considered two polygons $\Pol$ and $\PolQ$ with $\PolQ\subseteq\Pol$, where the subpolygon $\PolQ$ should be guarded by guards placed in~\Pol\ (assuming that $\PolQ$'s boundary is transparent). 
In~\cite{ehm-aatol-05}, a point $p\in \PolQ$ is \emph{2-guarded at angle $\alpha$} by two guards $g_1,g_2$ if $\angle g_1 p g_2 \in[\alpha, \pi-\alpha]$ and both guards see~$p$. 
Smith and Evans defined a point $p\in \PolQ$ to be \emph{triangle-guarded} by $g_1, g_2, g_3$ if $p$ is seen by each of the three guards and is contained in the triangle spanned by them. 
Another variant of robust guarding has recently been established by Das, Filtser, Katz, and Mitchell~\cite{dfkm-rgp-24,DasFKM24}; and a variant of robustness for a single watchman by Langetepe, Nilsson, and Packer~\cite{lnp-dstpd-17}.

Huynh, Mitchell, and Polishchuk~\cite{HuynhMP25} consider a problem that is similar to our notion of segment watchman routes. 
They study a version of monitoring a polygon via line segments: sweeping a polygonal domain with variable-length segments whose endpoints are two mobile agents. 
A point is seen if at some point in time, it lies on a segment between the current positions of the two agents. 
This differs from our definition of being segment-seen by the temporal constraint, as depicted in \Cref{fig:non-trivial-min-max-example-c}.

\section{Preliminaries and Key Lemmas}\label{sec:prel}
Let \Pol\ be a simple polygon with $n$ vertices. Denote by $\partial \Pol$ the boundary of \Pol,\, and for two points $a,b\in \partial \Pol$ denote by $\partial\Pol[a,b]$ the boundary of \Pol\ between $a$ and $b$ in counterclockwise order. For two points $a,b\in \Pol$, denote by $\pi(a,b)$ the geodesic path between $a$ and $b$ in \Pol\ (i.e., the shortest path in \Pol\ connecting $a$ and $b$).
We assume that \Pol\ does not contain vertices with an internal angle of exactly $180^\circ$, i.e., no three consecutive vertices are on the same line. If~\Pol\ does contain such a vertex, we can simply remove it. 

A set $H\subseteq\Pol$ is \emph{geodesically convex} if for any two points $p,q\in H$, the geodesic $\pi(p,q)$ lies in $H$.
The \emph{relative convex hull} $\H(X)$ of a set $X$ of points in a simple polygon \Pol\ is the smallest geodesically convex set within \Pol\ that contains $X$. 
The relative convex hull~$\H(X)$ forms a weakly simple polygon, that is, a polygon whose boundary does not properly self-intersect, although vertices may be repeated. Its boundary is composed of geodesic paths that connect pairs of points in $X$. 
When $X$ is a route, we slightly abuse the notation and refer to the boundary of $\H(X)$ as the relative convex hull of $X$.

A route $W$ in \Pol\ is called a \emph{watchman route} if every point in \Pol\ is visible from some point on $W$\!. Recall that the length of $W$ is denoted by~$|W|$. We say that $W$ is \emph{simple} if it does not cross itself, and that $W$ is \emph{relatively convex} if it coincides with its relative convex hull $\H(W)$.

Let $W_1$ and $W_2$ be segment watchman routes for \Pol\!. By definition, the following~holds:

\begin{observation}\label{obs:watchmen-routes}
    Each of the routes $W_1$ and $W_2$ is a watchman route for~\Pol.
\end{observation}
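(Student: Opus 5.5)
This is immediate from the definition of segment watchman routes, so the plan is a direct unpacking of that definition, carried out for $W_1$; the case of $W_2$ is symmetric.

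Fix an arbitrary point $p\in\Pol$. Since $W_1,W_2$ are segment watchman routes, there exist $w_1\in W_1$ and $w_2\in W_2$ such that $p$ is segment-guarded by $w_1$ and $w_2$. By the definition of segment-guarding, this means in particular that each of $w_1,w_2$ sees $p$, so $\overline{w_1p}\subseteq\Pol$. Hence $p$ is visible from the point $w_1$ on $W_1$. Since $p$ was arbitrary, every point of \Pol\ is visible from some point of $W_1$, i.e., $W_1$ is a watchman route. The condition that $p$ lies on $\overline{w_1w_2}$ is simply discarded.

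There is no real obstacle here. The only point worth noting is that the definition of segment-guarding explicitly requires each of $w_1$ and $w_2$ to see $p$. Even without that clause, the requirement $\overline{w_1w_2}\subseteq\Pol$ from the problem statement would give the same conclusion, since the subsegment $\overline{w_1p}$ then lies in \Pol.
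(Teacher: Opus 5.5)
Your proposal is correct and matches the paper, which states that the observation holds ``by definition''. Your unpacking is exactly that argument: segment-guarding requires the point $w_1\in W_1$ to see $p$, so $W_1$ sees every point of $\Pol$, and $W_2$ follows by symmetry.
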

Note that \Cref{obs:watchmen-routes} also holds for $k$-hull watchman routes.

The \emph{visibility polygon} of a point $p\in \Pol$ is the set of points in \Pol\ that are visible from~$p$.
The following observation is well known and largely considered folklore; however, we include it for completeness.
\begin{restatable}{observation}{clmWatchmanRoutes}\label{clm:cond2}
    Let $W$ be a route in \Pol\!, then $W$ is a watchman route if and only if it visits the visibility polygon of every convex vertex of \Pol\!.
\end{restatable}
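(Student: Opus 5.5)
The forward direction is immediate. If $W$ is a watchman route, then every point of \Pol\ is seen from some point of $W$. This holds in particular for each convex vertex $v$, so some $w\in W$ sees $v$. By symmetry of visibility, $w$ lies in the visibility polygon of $v$.

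For the converse, assume $W$ meets the visibility polygon $V(v)$ of every convex vertex $v$ of \Pol. Suppose for contradiction that some point $p\in\Pol$ is seen by no point of $W$. The plan is to derive from $p$ a convex vertex whose visibility polygon is disjoint from $W$.

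First I would use connectivity of $W$. The route $W$ is a closed curve, hence connected, and $W\cap V(p)=\emptyset$. So $W$ lies in a single connected component $C$ of $\Pol\setminus V(p)$. The boundary of $V(p)$ inside \Pol\ consists of \emph{windows}: chords $e=\overline{ru}$ with $r$ a reflex vertex of \Pol\ that is hit by a ray from $p$, the ray being extended to the boundary point $u$. The component $C$ is therefore a pocket of \Pol\ cut off by a single window $e$, with $C$ on the side of $e$ away from $p$.

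Next I would find a convex vertex $v$ of \Pol\ on $\partial\Pol$ strictly on $p$'s side of the line through $e$, or at least in the subpolygon $\Pol'$ containing $p$, such that $V(v)$ does not enter $C$. The natural candidate is a boundary vertex near where the window starts. Consider the edge of \Pol\ incident to $r$ that lies on $p$'s side. Walk along $\partial\Pol$ from $r$ away from $C$ until the first convex vertex $v$. Any point of $C$ lies strictly beyond the line $\ell$ through $p$ and $r$. A segment from $v$ into $C$ must cross $e$ or pass around $r$, and since $r$ is reflex with its incident edge on the $p$ side, the segment is blocked.

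More carefully, I would pick $v$ extremal in the direction perpendicular to $\ell$ on the $p$ side. An alternative is the following: $C$ is a pocket behind the reflex vertex $r$, and points that see into $C$ must lie in the wedge beyond the extension of the edge at $r$. That wedge is bounded by the half-plane $h$ on the far side of the line supporting the edge of $r$ that faces $p$. I would then show that $p$'s side contains a convex vertex outside $h$. The region $\Pol'\setminus h$ contains $p$ and is a nonempty polygonal region whose vertices on $\partial\Pol$ cannot all be reflex. A region bounded by the polygon boundary and a straight cut must have a locally convex extreme vertex, namely the vertex of that region farthest from the cut line. That vertex lies on $\partial\Pol$, since the cut is straight, and it is a convex vertex of \Pol.

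The main obstacle is this last geometric step. I need to choose the cut line so that two things hold at once. First, every point seeing into $C$ lies on the other side of the cut. Second, the farthest vertex on $p$'s side is a genuine convex vertex of \Pol\ rather than an endpoint of the cut. I would use the line supporting the window $e$ itself. Visibility from $\Pol'$ into $C$ must cross $e$. A convex vertex $v$ strictly on $p$'s side of that line sees points of $C$ only through $e$, and the resulting lines of sight would make $p$ see into $C$ as well. To rule this out, I would handle the degenerate case where $v$ is collinear with $e$ using the assumption that there are no $180^\circ$ vertices. Together these give $V(v)\cap W=\emptyset$, the desired contradiction.
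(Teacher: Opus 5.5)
Your forward direction and the reduction to a single pocket $C$ behind a window $e=\overline{ru}$ are fine. The step that should produce an unseen convex vertex fails in the form you finally commit to. The line $L$ supporting $e$ passes through $p$, since windows lie on rays from $p$, so ``$p$'s side of that line'' is undefined. If you read it as the side opposite to $C$, your key claim is false: a convex vertex there can see into $C$ through $e$ without forcing $p$ to see into $C$. Take the square $[-10,10]^2$ with a thin tooth $(10,-1),(0,0),(10,1)$ entering from the right wall, and $p=(0,-5)$. The pocket is the region above the tooth. Both left corners $(-10,\pm10)$ see into it through the window $\{0\}\times[0,10]$, while $p$ does not. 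The premise of your half-plane variant fails on the same example: $(-10,-10)$ sees into $C$ but lies on $p$'s side of the supporting lines of both edges at $r$. The root cause is your local picture at $r$. Because the ray from $p$ passes $r$ and continues inside $\Pol$, both edges at $r$ lie in the closed half-plane of $L$ that contains $C$. There is no edge of $r$ ``on the $p$ side''.

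The missing idea, which is the paper's, is to look for the witness on the same side of $L$ as the pocket. Extend $e$ through $r$ and past $p$ to a maximal segment $\ell\subseteq\Pol$. The edge at $r$ that does not bound $C$ lies on $C$'s side of $L$. So the part of $\ell$ running from $r$ towards $p$ cuts off a subpolygon $\Pol''$ on that side. (If that edge lies on $L$, you need a small separate argument; this is the paper's ``edge with a convex endpoint'' case.)

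$\Pol''$ is a simple polygon, so it has at least three convex vertices. At most two of them are endpoints of the cut, so $\Pol''$ contains a convex vertex $v$ of $\Pol$. No point of $C$ sees $v$. A segment from $C$ to $v$ would have to meet both $e$ and the cut bounding $\Pol''$. These lie on $L$ and share only $r$, so the segment would cross $L$ exactly at $r$. One of its halves would then run on the far side of $L$, in a part of $\Pol$ from which neither $C$ nor $\Pol''$ can be reached without meeting $\ell$ again.

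Incidentally, your very first candidate does land in $\Pol''$ and is a valid choice: walk along $\partial\Pol$ from $r$ away from $C$ to the first convex vertex. What it lacked was this separation argument. What went wrong was the later switch to a vertex on the other side of $L$.
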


\begin{claimproof}
	One direction is by definition: if $W$ is a watchman route then it must see all the convex vertices, and thus it visits the visibility polygon of every convex vertex.
	
	For the second direction, let $W$ be a route that visits the visibility polygon of every convex vertex, and assume by contradiction that there is a point $p\in\Pol$ that is not seen by~$W$, that is, no point of $W$ lies in~$p$'s visibility polygon. 
	Hence, $W$ is fully contained in one of the \emph{pockets} $\Pol'$ of $p$'s visibility polygon (i.e., a subpolygon of \Pol\ in which no point is visible from $p$, as depicted in~\Cref{fig:pocket}).
	
	\begin{figure}[htb]
		\centering
		\includegraphics[page=3]{figures/conditions.pdf}
		\caption{$\ell\setminus s$ is either an edge of \Pol\ with a convex endpoint (left), or it splits \Pol\ into at least two subpolygons, one of which also lies to the right of $\ell$ (right).}
		\label{fig:pocket}
	\end{figure}
	
	Extend the pocket's \emph{window}~$s$ (the line segment that separates $\Pol'$ and $\Pol\setminus\Pol'$) into a maximal line segment $\ell$ contained in \Pol. 
	Without loss of generality, let $\ell$ be a vertical line segment with~$\Pol'$ to its right. 
	As $p\in\ell$, $\ell\setminus s$ is either a polygonal edge with a convex endpoint not seen by $W$, or it splits~\Pol\ into at least two subpolygons. 
	In the latter case, at least one of the subpolygons, call it $\Pol''$, also lies to the right of $\ell$.
	$W$ cannot see any convex vertex in~$\Pol''$, yielding a contradiction.
\end{claimproof}

We next establish sufficient conditions ensuring that two routes form segment watchman routes.
We subsequently extend the argument to the general $k$-hull setting in~\cref{sec:kgon-gen}.

\begin{lemma}[The Conditions Lemma for Segment Watchman Routes]\label{lem:conditions}
    Two routes $W_1$ and $W_2$ are segment watchman routes for \Pol\ if the following conditions hold:
    \begin{enumerate}[(i)]
        \item Every convex vertex is visited by one of $W_1$ or~$W_2$.\label{cond1}
        \item Both $W_1$ and $W_2$ visit the visibility polygon of each convex vertex.\label{cond2}
        \item Both $W_1$ and $W_2$ are relatively convex.\label{cond3}
    \end{enumerate}
\end{lemma}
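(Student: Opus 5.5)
The plan is to fix an arbitrary point $p\in\Pol$ and construct $w_1\in W_1$ and $w_2\in W_2$ that both see $p$ and satisfy $p\in\overline{w_1w_2}$. First, condition (ii) together with \Cref{clm:cond2} shows that each of $W_1,W_2$ is a watchman route, so $p$ sees some point of each route. The main tool is then a ray argument based on relative convexity, condition (iii).

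\emph{Ray lemma.} Suppose $p\in\H(W_2)$ and $w_1\in W_1$ sees $p$. Extend the segment $\overline{w_1p}$ beyond $p$ until it first meets $\partial\Pol$, say at $q$; the whole segment $\overline{w_1q}$ lies in \Pol. Since $\H(W_2)\subseteq\Pol$ is a weakly simple polygon whose boundary is $W_2$ (condition (iii)), the portion $\overline{pq}$ of the ray must leave $\H(W_2)$ at or before $q$. The exit point $w_2$ therefore lies on $W_2$. Because $\overline{pw_2}\subseteq\overline{w_1q}\subseteq\Pol$, the point $w_2$ sees $p$, and $p\in\overline{w_1w_2}$. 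If $p$ lies on $W_2$ itself we simply take $w_2=p$. The same argument with the roles of the routes exchanged handles every $p\in\H(W_1)$.

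\emph{Points outside both hulls.} It remains to treat points $p\notin\H(W_1)\cup\H(W_2)$, and this is where conditions (i) and (ii) must be combined. I would first describe the components of $\Pol\setminus\H(W_1)$. Each part of the hull boundary that leaves $\partial\Pol$ is a geodesic between points of $W_1$. Any bend of such a geodesic occurs at a reflex vertex of \Pol, which lies on $\partial\Pol$ and therefore separates components. Hence each component $C$ of $\Pol\setminus\H(W_1)$ should be cut off by a single straight chord $s\subseteq W_1$, with $C$ lying on one side of the line through $s$.

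Now let $C$ be the component containing $p$. Extend $s$ to a maximal segment in \Pol, exactly as in the proof of \Cref{clm:cond2}. This shows that $C$ contains a convex vertex $v$ of \Pol\ that $W_1$ does not visit, so by condition (i) the route $W_2$ visits $v$. By condition (ii), $W_2$ also meets the visibility polygon of every convex vertex inside $C$. From this I want to conclude that either $p\in\H(W_2)$, which reduces to the ray lemma, or $W_2$ enters $C$ and cuts it into a part in $\H(W_2)$ and sub-pockets in which the roles of the two routes can be exchanged. In the latter situation I would choose $w_1$ on the chord $s$ and aim the ray from $w_1$ through $p$ so that, beyond $p$, it must cross $W_2$ before reaching $\partial\Pol$.

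The main obstacle is this final case: proving rigorously that a point lying outside both relative convex hulls still lies on a segment joining the two routes. My first attempt would be to show by contradiction that $\Pol=\H(W_1)\cup\H(W_2)$. Take a point $p$ outside both hulls, and intersect the pocket of $W_1$ containing $p$ with the pocket of $W_2$ containing $p$. Applying the pocket argument from \Cref{clm:cond2} to this intersection should produce a convex vertex visited by neither route, contradicting (i). If that fails, I would fall back to the direct choice of $w_1$ on the chord $s$ described above.
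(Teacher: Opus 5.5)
\textbf{Verdict: there is a genuine gap.} Your ray lemma is correct; it is exactly the paper's case in which one of the two viewing wedges at $p$ covers $360^\circ$. The gap is the remaining case $p\notin\H(W_1)\cup\H(W_2)$. This case carries the real content of the lemma, and your proposal does not settle it.

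\emph{The first attempt aims at a false statement.} You try to show this case never occurs, i.e., $\Pol=\H(W_1)\cup\H(W_2)$. That is false under the hypotheses. Let $\Pol$ be a square $ABCD$ with center $O$, let $W_1$ go back and forth along the diagonal $\overline{AC}$, and let $W_2$ do the same along $\overline{BD}$.
\begin{itemize}
\item Conditions (i)--(iii) hold: every convex vertex sees all of $\Pol$, every vertex is visited, and a doubled segment is relatively convex.
\item The routes are segment watchman routes: a point of the triangle $AOB$ lies on a segment parallel to $\overline{AB}$ from $\overline{OA}$ to $\overline{OB}$, and similarly for the other triangles.
\item Yet the four open triangles around $O$ lie outside both hulls.
\end{itemize}
Correspondingly, the intersection of the two pockets containing such a $p$ is bounded by two chords rather than a single window. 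Its only vertices of $\Pol$ are $A$ and $B$, and both are visited, so the pocket argument of \Cref{clm:cond2} produces no contradiction.

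\emph{The fallback does not supply an argument.} Picking $w_1$ on the chord $s$ and aiming so that the ray meets $W_2$ beyond $p$ just restates what has to be proved. Nothing in the plan explains why such a direction exists, and the recursion into sub-pockets with the roles exchanged has no termination argument.

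As a minor point, $C$ need not lie on one side of the line through $s$: it can wrap around an endpoint of $s$ that is a reflex vertex of $\Pol$. You do not need this claim anyway, since $C$ is a simple polygon with at least three convex vertices, at most two of which lie on $s$.

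\textbf{The missing idea.} Argue with the directions around $p$ rather than with the pockets of the hulls.
\begin{itemize}
\item Let $F_i$ be the set of directions in which $p$ sees a point of $W_i$. For $p\notin\H(W_i)$, relative convexity and simplicity of $\Pol$ make $F_i$ a single wedge of less than $180^\circ$.
\item Key claim: each maximal wedge of directions in which $p$ sees neither route spans at most $180^\circ$. Otherwise it contains both directions of some line $\ell$ through $p$. The maximal segment $\overline{ab}\subseteq\ell\cap\Pol$ through $p$ then contains no point of $W_1\cup W_2$, since such a point would be seen from $p$ in a blind direction.
\item Both routes are connected and have points seen from $p$ on the other side of $\ell$. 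Hence both avoid the subpolygon that $\overline{ab}$ cuts off on the blind side of $p$. That subpolygon contains a convex vertex of $\Pol$, which is then visited by neither route, contradicting (i).
\item With $F_1$ and $F_2$ each shorter than $180^\circ$ and all gaps at most $180^\circ$, elementary arc arithmetic gives a direction $d\in F_1$ with $-d\in F_2$. This yields points $w_1\in W_1$ and $w_2\in W_2$ seen from $p$ on opposite sides of it.
\end{itemize}
Condition (i) is applied to a chord through $p$ itself, not to the hull edge $s$ bounding $p$'s pocket. This is precisely what your pocket-based approach is missing.
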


\begin{proof}
By \Cref{clm:cond2}, Condition (\ref{cond2}) implies that $W_1$ and $W_2$ are watchman routes. 
Consider a point $p\in\Pol$\!.
Since both $W_1$ and $W_2$ are watchman routes, there exists at least one point on $W_1$ and at least one point on $W_2$ that $p$ sees. 
Consider the two wedges defined by the angles from which $p$ is viewing $W_1$ and $W_2$, as visualized in~\Cref{fig:conditions}: let $F_1$ be the maximal wedge bounded by two rays starting at $p$, such that for every ray $\rho$ in $F_1$ there is a point $w\in W_1$ in this direction that $p$ sees. 
Note that because \Pol\ is simple, $F_1$ is a single wedge. 
The wedge $F_2$ is defined analogously for $W_2$.

\begin{figure}[htb]
    \centering
    \includegraphics[page=1]{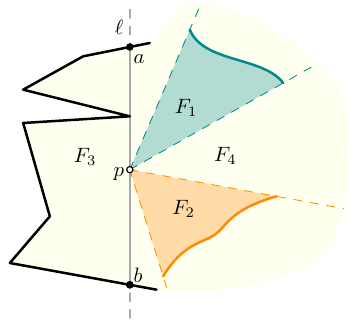}
    \caption{The wedges $F_1$ and $F_2$ define the angles from which $p$ is viewing $W_1$ and $W_2$, respectively. If $F_3$ or $F_4$ is larger than $180^\circ$, then there is a convex vertex on the left side of $\ell$ which is not visited.}
    \label{fig:conditions}
\end{figure}

Consider the two wedges $F_1$ and $F_2$. If $p$ lies on or within one relatively convex route, the corresponding wedge covers $360^\circ$. 
If $p$ does not lie on or within one route, 
the corresponding wedge covers less than $180^\circ$ because by Condition (\ref{cond3}) both routes are relatively convex. 
If at least one of $F_1$ and $F_2$ covers $360^\circ$ around~$p$, then $p$ is segment-guarded: assume that $F_2$ covers $360^\circ$ around~$p$, and let $w_1$ be a point on $W_1$ that sees $p$. 
Then the ray from~$w_1$ in the direction of~$p$ intersects $W_2$ at point $w_2$ that sees $p$, and thus $p$ is segment-guarded by $\overline{w_1w_2}$.
Hence, assume that neither $F_1$ nor $F_2$ covers $360^\circ$ around $p$. 
Let $F_3$ (and, if it exists, $F_4$) be the maximal wedge(s) bounded by two rays originating at $p$, such that for every ray $\rho$ within $F_3$ (and $F_4$), there is no point $w \in W_1$ or $w \in W_2$ along $\rho$ that is visible from $p$.
The plane around~$p$ can then be partitioned into up to four wedges, depending on whether $F_1$ and~$F_2$ intersect. In the non-intersecting case, the wedges are $F_1$, $F_2$, $F_3$, and $F_4$. 
If~$F_1$ and~$F_2$ overlap, the wedges are $F_1$, $F_2$, $F_3$, and a single wedge corresponding to the intersection of $F_1$ and $F_2$.

We now argue that neither $F_3$ nor $F_4$ can cover more than $180^\circ$. This implies that one of the rays through $p$ that define $F_1$ must meet a point on $W_2$ in the opposite direction from $p$, and therefore $p$ is segment-guarded and the claim follows.

Without loss of generality, assume that $F_3$ covers more than $180^\circ$ (as shown in~\cref{fig:conditions}).
Consider a line $\ell$ through $p$ in $F_3$ that does not contain an edge of the boundary of \Pol, and assume that $\ell$ is a vertical line and that both $F_1$ and $F_2$ are on the right side of $\ell$. 
Let $\overline{ab}$ be the maximal line segment on $\ell$ that is contained in~\Pol\!. 
Then $\overline{ab}$ splits \Pol\ into at least two subpolygons, and at least one of them, $\Pol'\!\!$, is on the left side of $\overline{ab}$. Because \Pol\ is simple and both $W_1$ and $W_2$ do not cross $\overline{ab}$, there are no points of $W_1$ and $W_2$ in~$\Pol'\!\!$.
However, $\Pol'$ must contain a convex vertex $v$. This yields a contradiction, as by Condition~(\ref{cond1}), $v$ needs to be visited by at least one of the watchman~routes.
\end{proof}

\begin{figure}[htb]
    \captionsetup[subfigure]{justification=centering}%
    \centering
    \begin{subfigure}{0.25\columnwidth}%
    \centering
        \includegraphics[page=4]{figures/conditions.pdf}%
        \caption{}%
        \label{fig:cond-vs-opt-a}
    \end{subfigure}%
        \begin{subfigure}{0.25\columnwidth}%
    \centering
        \includegraphics[page=5]{figures/conditions.pdf}%
        \caption{}%
        \label{fig:cond-vs-opt-b}
    \end{subfigure}%
    \begin{subfigure}{0.25\columnwidth}%
    \centering
        \includegraphics[page=6]{figures/conditions.pdf}%
        \caption{}%
        \label{fig:cond-vs-opt-c}
    \end{subfigure}%
        \begin{subfigure}{0.25\columnwidth}%
    \centering
        \includegraphics[page=7]{figures/conditions.pdf}%
        \caption{}%
        \label{fig:cond-vs-opt-d}
    \end{subfigure}%
    \caption{ (a) $W_1$ and $W_2$ are segment watchman routes (e.g., $p$ lies on $\overline{w_1 w_2}$), but do not fulfill the conditions of~\cref{lem:conditions}. 
    They are not optimal: $W_1$'s relative convex hull $H(W_1)$ (the boundary of the polygon) is shorter than $W_1$, and $H(W_1)$ together with $W_2$ are segment watchman routes.
    (b)~$W_1$ and $W_2$ do not fulfill Condition (iii) of~\cref{lem:conditions} and they are not segment watchman routes, since $p$ is not properly guarded. 
    (c) and (d) depict segment watchman routes that fulfill all the conditions of~\cref{lem:conditions}.}%
    \label{fig:cond-vs-opt}
\end{figure}

In~\cref{lem:conditions}, the conditions imply that $W_1$ and $W_2$ are segment watchman routes. However, there exist segment watchman routes that do not fulfill these conditions, see~\cref{fig:cond-vs-opt}.
On~the~other hand, there always exists an optimal solution for the segment watchman route problem in any simple polygon with respect to the min-max and min-sum criterion for which Conditions (\ref{cond1}), (\ref{cond2}), and (\ref{cond3}) hold.

\begin{lemma}\label{obs:optimal-segment-1-2}
Let \Pol\ be a simple polygon. 
If two routes $W_1$ and $W_2$ are segment watchman routes for \Pol, then Conditions (\ref{cond1}) and (\ref{cond2}) from~\cref{lem:conditions} hold.
\end{lemma}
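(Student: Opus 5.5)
Condition (\ref{cond2}) is the easy half. By \Cref{obs:watchmen-routes}, each of $W_1$ and $W_2$ is a watchman route for \Pol. The forward direction of \Cref{clm:cond2} then says that each route visits the visibility polygon of every convex vertex of \Pol. So only Condition (\ref{cond1}) needs an argument.

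For Condition (\ref{cond1}) I will argue by contradiction. Let $v$ be a convex vertex of \Pol, and suppose neither $W_1$ nor $W_2$ passes through $v$. By definition, $v$ is segment-guarded, so there are points $w_1\in W_1$ and $w_2\in W_2$ with $v\in\overline{w_1w_2}\subseteq\Pol$, and both $w_1$ and $w_2$ see $v$. Since $v$ lies on neither route, $v\neq w_1$ and $v\neq w_2$. Hence $v$ lies in the relative interior of $\overline{w_1w_2}$, and the segment extends a positive distance beyond $v$ in two opposite directions $d$ and $-d$.

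The key geometric step is to show that this cannot happen at a convex vertex. Because the interior angle at $v$ is strictly less than $\pi$ (recall we excluded $180^\circ$ angles), the set of directions $u$ for which $v+\varepsilon u\in\Pol$ for all small $\varepsilon>0$ is a closed cone $C$ of angle less than $\pi$. Two antipodal directions $d$ and $-d$ cannot both lie in such a cone. Therefore one of the two sub-segments $\overline{vw_1}$ or $\overline{vw_2}$ leaves \Pol\ immediately near $v$, which contradicts $\overline{w_1w_2}\subseteq\Pol$. So $v=w_1$ or $v=w_2$, meaning $v$ is visited by $W_1$ or $W_2$.

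The main obstacle is stating the local cone argument cleanly, including the degenerate case where the segment runs along an edge incident to $v$. In that case one direction points along the edge, which lies in $C$. The opposite direction then points outside the angle, because the angle is strictly convex and not equal to $\pi$. So the argument still goes through. Simplicity of \Pol\ guarantees that the neighbourhood of $v$ in \Pol\ is exactly this single wedge, with no other part of \Pol\ near $v$.
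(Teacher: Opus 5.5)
Your proof is correct and follows essentially the same route as the paper. Condition (\ref{cond2}) comes from \Cref{obs:watchmen-routes} together with \Cref{clm:cond2}, and Condition (\ref{cond1}) comes from the fact that any segment in \Pol\ through a convex vertex (interior angle $<180^\circ$) must have that vertex as an endpoint; your local-cone argument spells out this fact, which the paper asserts without further justification.
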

\begin{proof}
    First, notice that by \Cref{clm:cond2}, Condition (\ref{cond2}) is a necessary condition, that is, if $W_1$ and~$W_2$ are segment watchman routes then Condition~(\ref{cond2}) holds. 
    
    Next, we claim that Condition~(\ref{cond1}) is also a necessary condition, that is, we show that if $W_1$ and $W_2$ are segment watchman routes, then every convex vertex of \Pol\ must be visited by one of $W_1$ or $W_2$. Let $v$ be a convex vertex of \Pol. 
    Then $v$ lies on a line segment $\overline{w_1w_2}$ with $w_1\in W_1$ and $w_2\in W_2$, and the segments $\overline{vw_1}$ and $\overline{vw_2}$ are contained in \Pol. As the interior angle at $v$ is strictly smaller than $180^\circ$\!, any line segment in \Pol\ that contains $v$ has $v$ as one of its endpoints, and therefore $v$ is visited by either $W_1$ or $W_2$.
\end{proof}

\begin{lemma}\label{obs:optimal-segment}
Let \Pol\ be a simple polygon. 
There always exists an optimal solution for the segment watchman route problem in \Pol\ with respect to the min-max and min-sum criterion for which Conditions~(\ref{cond1}), (\ref{cond2}), and (\ref{cond3}) from~\cref{lem:conditions} hold.
\end{lemma}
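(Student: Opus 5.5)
Take an arbitrary optimal solution $W_1,W_2$ (min-max or min-sum) and replace each route by the boundary of its relative convex hull. Concretely, set $W_i' = \partial\H(W_i)$ for $i\in\{1,2\}$. The argument has three steps.

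\textbf{Step 1: the hull is no longer.} We first show $|W_i'|\le |W_i|$. This is the standard fact that in a simple polygon the boundary of the relative convex hull of a closed curve is the shortest closed curve enclosing it. One way to argue it is to note that $\partial\H(W_i)$ consists of geodesics $\pi(a,b)$ between points $a,b$ of $W_i$. Each such geodesic is no longer than the portion of $W_i$ it replaces, and distinct hull edges correspond to disjoint portions of $W_i$, because the hull vertices appear along $W_i$ in cyclic order. If $W_i$ is a degenerate route (a segment traversed back and forth, or a single point), its hull boundary is traversed doubly and the same comparison holds. Since each route gets no longer, both the maximum and the sum of the lengths do not increase, so $W_1',W_2'$ is at least as good as the original solution. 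Condition~(\ref{cond3}) holds by construction.

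\textbf{Step 2: Conditions~(\ref{cond1}) and~(\ref{cond2}) survive.} By \cref{obs:optimal-segment-1-2}, the original routes satisfy both conditions. Condition~(\ref{cond1}) transfers directly: a convex vertex $v$ visited by $W_i$ lies in $\H(W_i)$. Since $v$ is a convex vertex of \Pol, it cannot lie in the relative interior of the weakly simple polygon $\H(W_i)$ unless $\H(W_i)$ contains a neighbourhood of $v$ within \Pol. In that case $v$ is still on the hull boundary, because $v\in\partial\Pol$ and the hull lies in \Pol. Hence $v\in W_i'$.

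For Condition~(\ref{cond2}), we must show that the hull boundary still meets the visibility polygon $V(v)$ of each convex vertex $v$. The plan is a separation argument. Suppose $W_i'$ misses $V(v)$. Then $W_i'$, which is connected, lies in a single pocket of $V(v)$, behind a window $s$ whose chord separates \Pol. The geodesically convex set bounded by $W_i'$, namely $\H(W_i)$, then also lies on that side of the chord, with the possible exception of the case where it contains $V(v)$ entirely. But $W_i\subseteq \H(W_i)$ meets $V(v)$. So either $\H(W_i)$ crosses the window, in which case its connected boundary must cross it as well, or $\H(W_i)$ strictly contains a region around $V(v)$, in which case its boundary again intersects $V(v)$ or the window. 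Either outcome contradicts the assumption.

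\textbf{Main obstacle.} Step 2, making the ``boundary versus filled hull'' argument for Condition~(\ref{cond2}) rigorous, is where I expect the difficulty. As a simpler route, I would first try the following: every point of the relative hull is visible from its boundary, so the filled hull and its boundary see the same convex vertices, and Condition~(\ref{cond2}) transfers via \Cref{clm:cond2}. The degenerate cases in Step 1 (hull equal to a segment or a point) need a short separate remark.

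Finally, no appeal to \cref{lem:conditions} is needed. Steps 1 and 2 already give an optimal solution satisfying all three conditions, which proves the statement.
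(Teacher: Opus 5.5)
The genuine gap is feasibility. The statement asks for an \emph{optimal solution}, so you must show that $W_1'=\partial\H(W_1)$ and $W_2'=\partial\H(W_2)$ are segment watchman routes at all. Steps~1 and~2 only show that the new routes are no longer than $W_1,W_2$ and satisfy Conditions~(i)--(iii). Your phrase ``at least as good as the original solution'' silently presupposes feasibility, and feasibility is not automatic: passing to the hull boundary discards every guarding point $w_i$ that lay in the interior of $\H(W_i)$.

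Your closing remark that no appeal to \cref{lem:conditions} is needed therefore gets the logic backwards. The paper verifies the three conditions for $\H(W_1),\H(W_2)$ precisely so that \cref{lem:conditions} certifies them as segment watchman routes. Only then does $|\H(W_i)|\le|W_i|$ yield optimality. Adding that one invocation closes the gap.

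Alternatively, you can prove feasibility directly and skip Step~2. Suppose $p$ is segment-guarded by $w_1,w_2$. Let $\sigma_1,\sigma_2\subseteq\Pol$ be the two maximal segments that start at $p$ and run in opposite directions along the line through $w_1,w_2$ (any line through $p$ if $w_1=w_2$), labelled so that $w_i\in\sigma_i$. Let $x_i$ be the point of $\sigma_i\cap\H(W_i)$ farthest from $p$. Then $x_i$ is a boundary point of $\H(W_i)$: either it is the endpoint of $\sigma_i$ on $\partial\Pol$, or points just beyond it on $\sigma_i$ lie outside the hull. Hence $x_i\in W_i'$, $x_i$ sees $p$, and $p\in\overline{x_1x_2}$. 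Conditions~(i) and~(ii) then follow from \cref{obs:optimal-segment-1-2} applied to the new pair.

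There are also two smaller issues.

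In Step~1, your claim that the hull vertices appear along $W_i$ in cyclic order is false for self-crossing routes. Think of a route visiting the corners of a square in the order $1,3,2,4$. So the disjoint-portions argument fails as written. The inequality $|\partial\H(W_i)|\le|W_i|$ is still true, and the paper simply uses it as known. A correct justification needs an uncrossing step in the spirit of \cref{obs:simple_route} and \cref{obs:relative_hull_is_shortest}.

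In Step~2, your fallback for Condition~(ii) is a non sequitur. That every hull point is seen from the hull boundary does not imply that a vertex seen from some hull point is seen from the boundary, because visibility is not transitive. The direct argument, which is what the paper uses, needs no pockets. Take $w\in W_i$ that sees $v$. If $v\in\H(W_i)$, then $v\in W_i'$ because $v\in\partial\Pol$. Otherwise the segment $\overline{wv}\subseteq\Pol$ leaves $\H(W_i)$, so it meets $W_i'$ at a point that sees $v$.
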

\begin{proof}
    Let $W_1$ and $W_2$ be an optimal solution for the segment watchman route problem in~\Pol\ (with respect to either the min-max or the min-sum criterion). 
    Conditions (\ref{cond1}) and (\ref{cond2}) from~\cref{lem:conditions} hold by \Cref{obs:optimal-segment-1-2}. 

    We show that $\H(W_1)$ and $\H(W_2)$ are segment watchman routes. Since $|\H(W_i)|\le W_i$, this would prove our claim. 
    Clearly, since Condition~(\ref{cond1}) holds for $W_1$ and $W_2$, it also holds for $\H(W_1)$ and $\H(W_2)$. 
    Condition~(\ref{cond3}) holds by definition. 
    We therefore only need to prove that Condition~(\ref{cond2}) still holds.
    Let $v$ be a convex vertex of \Pol, and assume that $W_i$ visits the visibility polygon of~$v$. Let $w$ be a point on $W_i$ that sees $v$. 
    The segment $\overline{vw}$ is contained in~$\Pol$, and because~$\H(W_i)$ contains $W_i$, its boundary must cross $\overline{vw}$ at a point $w'$. We conclude that $\H(W_i)$ also visits the visibility polygon of $v$, and thus all conditions from \Cref{lem:conditions} hold, and both $\H(W_1)$ and $\H(W_2)$ are segment watchman routes.
\end{proof}

\section{Approximation Algorithms for Segment Watchman Routes}\label{sec:appx-segm}
We begin, as in the previous section, by considering segment watchman routes. 
We generalize the ideas developed in this section in~\Cref{sec:kgon-gen}.

Let $P$ be a simple polygon with $t$ convex vertices. 
We enumerate the convex vertices in counterclockwise order $V_C=\{v_0,\dots, v_{t-1}\}$, with $v_0$ chosen arbitrarily. In the following, we assume without loss of generality that indices are counted modulo $t$.

\begin{restatable}{lemma}{obsRelHullShortest}\label{obs:relative_hull_is_shortest}
    Let $X\subseteq V_C$ be a subset of the convex vertices of \Pol. 
    Then $\H(X)$ is a shortest route in \Pol\ that visits $X$.
\end{restatable}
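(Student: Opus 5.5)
The plan is to show two things: that the boundary of $\H(X)$ is itself a closed route visiting every vertex of $X$, and that no route visiting $X$ is shorter. For the first, $\H(X)$ is a weakly simple polygon whose boundary consists of geodesic paths between points of $X$. It contains $X$, and every point of $X$ is a convex vertex of \Pol. I will argue that such a vertex must lie on $\partial\H(X)$. Indeed, any point of $\H(X)$ is contained in $\Pol$, and a convex vertex $v$ of \Pol\ cannot be an interior point of any subset of \Pol, since a neighborhood of $v$ is not contained in \Pol. Hence $v$ lies on the boundary of $\H(X)$, and traversing $\partial\H(X)$ once (repeated vertices allowed) gives a closed route through all of $X$.

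For minimality, let $W$ be an arbitrary closed route in \Pol\ that visits all of $X$. I will compare $W$ with $\H(W)$. Since $\H(W)$ is geodesically convex and contains $X$, minimality of the relative convex hull gives $\H(X)\subseteq\H(W)$. The remaining task is to establish two inequalities:
\[ |\partial\H(X)|\le|\partial\H(W)|\le|W|. \]
Both are instances of the geodesic analogue of the classical fact that the perimeter of a convex set is monotone under inclusion, and that the relative hull of a closed curve is no longer than the curve.

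For the second inequality, list the points where $W$ touches $\partial\H(W)$ in their order along $\partial\H(W)$. Between consecutive contact points, the boundary of $\H(W)$ is the geodesic between them. Meanwhile, the corresponding portion of $W$ is some path in \Pol\ joining the same two points, so it is at least as long. Summing over consecutive pairs gives $|\partial\H(W)|\le|W|$. The ordering argument needs care when $W$ self-intersects. To handle this, I would instead note that $W$ visits the hull vertices of $\partial\H(W)$ (all reflex-free extreme points lie on $W$) in some cyclic order. The tour through them in that order has length at least the tour in boundary order, by the standard uncrossing argument for geodesics in a simple polygon.

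For the first inequality, I expect the main obstacle to be monotonicity of perimeter for nested geodesically convex sets $A\subseteq B$ in \Pol. I would prove it by a cutting argument. Take an edge $e$ of $\partial A$ that is not already on $\partial B$, and extend it maximally within $B$ to a chord. Because $A$ is geodesically convex and the chord is a straight segment, this chord separates $B$ so that $A$ lies on one side. Replacing the other side by the chord gives a geodesically convex set $B'$ with $A\subseteq B'\subseteq B$. The perimeter does not increase, since the chord is a shortest path between its endpoints inside \Pol. Iterating over the finitely many edges of $\partial A$ ends with $B'=A$. Combining the two inequalities yields $|\partial\H(X)|\le|W|$ for every route $W$ visiting $X$, which is the claim.
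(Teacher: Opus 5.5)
\textbf{Gap: the perimeter-monotonicity step.} Your first step is fine and matches the paper: a convex vertex of \Pol\ cannot be interior to $\H(X)\subseteq\Pol$, so it lies on $\partial\H(X)$. The failure is in the proof of $|\partial\H(X)|\le|\partial\H(W)|$. You claim that the maximal straight extension, inside $B$, of an edge of $\partial A$ is a chord with $A$ on one side. This is false for relatively convex sets. At a reflex vertex of \Pol\ where $\partial A$ bends, $A$ has an interior angle larger than $180^\circ$, so the straight continuation of the edge runs into the interior of $A$.

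Concretely, let \Pol\ be the square $[0,4]^2$ with a thin triangular notch $(1.5,4),(2,2),(2.5,4)$ cut from the top edge. Let $X$ be the four corners and $W=\partial\Pol$, so $B=\H(W)=\Pol$. The geodesic from $(0,4)$ to $(4,4)$ bends at the reflex vertex $(2,2)$. Hence $\partial\H(X)$ contains the edge from $(0,4)$ to $(2,2)$, which is not on $\partial B$. Its maximal extension in $B$ is the whole diagonal from $(0,4)$ to $(4,0)$. The points $(0,0),(4,4)\in\H(X)$ lie strictly on opposite sides of it, so ``replacing the other side by the chord'' cuts away part of $A$. Lines through edges of a relatively convex set are not supporting lines, so the Euclidean cutting proof does not transfer as written.

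\textbf{Repair, and why you can avoid it.} The step can be repaired by cutting $B$ along the geodesics $\pi(x,y)$ between boundary-consecutive points $x,y\in X$ instead of along straight extensions. Each such geodesic has its endpoints on $\partial\Pol$ and lies in $B$ by geodesic convexity. It separates \Pol\ with all of $X$ on one side, and the discarded part of $\partial B$ joins points of $\pi(x,y)$, so it is no shorter.

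The whole detour through $\H(W)$ is unnecessary, though. Your own uncrossing argument for $|\partial\H(W)|\le|W|$ applies directly to $X$:
\begin{enumerate}
\item Shortcut $W$ to the closed tour of geodesics through $X$, in the order in which $W$ visits $X$.
\item Uncross this tour into the cyclic order along $\partial\Pol$. For points on $\partial\Pol$, interleaving endpoints force the two geodesics to cross, because a geodesic between two boundary points separates \Pol.
\item Observe that the concatenation of $\pi(v_a,v_b)$ over boundary-consecutive $v_a,v_b\in X$ is exactly $\partial\H(X)$.
\end{enumerate}
This is the paper's proof via \Cref{obs:simple_route}. It needs no monotonicity lemma. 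It also avoids the delicate bookkeeping of contact points between $W$ and $\partial\H(W)$: for a general route these need not be finitely many and need not lie on $\partial\Pol$, which is where the interleaving argument is cleanest.
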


To prove \Cref{obs:relative_hull_is_shortest}, we first need the following observation, which is obtained by uncrossing the geodesic paths that are connecting the convex vertices, as visualized in~\Cref{fig:route-cross}.
\begin{observation}\label{obs:simple_route}
	Let $W$ be a shortest route that visits a subset $X\subseteq V_C$ of convex vertices of \Pol. Then there always exists a route $W'$ with the same length as $W$, that visits the vertices in $X$ according to their order along $\partial P$.
\end{observation}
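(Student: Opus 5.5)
We prove \Cref{obs:simple_route} by an uncrossing argument that strictly decreases a discrete potential while never increasing length. Let $W$ be a shortest route visiting $X=\{x_1,\dots,x_m\}$, with the points of $X$ indexed in counterclockwise order along $\partial\Pol$. Since $W$ is shortest, we may assume it is a concatenation of geodesics $\pi(y_1,y_2),\pi(y_2,y_3),\dots,\pi(y_m,y_1)$, where $y_1,\dots,y_m$ is the order in which $W$ first visits the vertices of $X$. Replacing each piece of $W$ between consecutive visits by the corresponding geodesic does not increase the length.

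If this cyclic order $(y_j)$ is not the boundary order (or its reverse), we perform an uncrossing step. Because all $y_j$ lie on $\partial\Pol$ and \Pol\ is simple, a non-boundary cyclic order contains two geodesic pieces $\pi(y_a,y_{a+1})$ and $\pi(y_b,y_{b+1})$ whose endpoints interleave along $\partial\Pol$. The interleaving endpoints separate \Pol, so these two geodesics must intersect at some point $z$. Splitting both paths at $z$, we reconnect them by reversing the subsequence $y_{a+1},\dots,y_b$. This gives a route through the pieces $y_a\to z\to y_b$ and $y_{a+1}\to z\to y_{b+1}$, which still traverses every piece of $W$ and has exactly the same length. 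Replacing the new pieces by the geodesics $\pi(y_a,y_b)$ and $\pi(y_{a+1},y_{b+1})$ can only shorten the route. Since $W$ was shortest, equality holds throughout.

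This is the standard 2-opt move. It strictly decreases the number of pairs of interleaving chords in the cyclic sequence, viewed as chords of a convex polygon whose vertices are the $x_i$ in boundary order. Hence, after finitely many steps we reach a visiting order with no interleaving pair. In that order consecutive chords never cross, which forces the order to be the boundary order: a Hamiltonian cycle on points in convex position with no crossing chords is the convex-position cycle. The resulting route $W'$ has the same length as $W$ and visits $X$ in the order of $\partial\Pol$.

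The main obstacle is justifying that the interleaving geodesics actually meet inside \Pol, and that the uncrossing respects the cyclic structure. For the first point, I would argue that $\pi(y_a,y_{a+1})$ splits \Pol\ into two parts, one containing $y_b$ and the other containing $y_{b+1}$; the only subtlety is when a geodesic touches the boundary at a reflex vertex. For the second point, I would check that reversing a subsequence keeps $W'$ a single closed route rather than splitting it into two cycles. This is the usual 2-opt choice of reconnection, and the alternative reconnection is excluded exactly as in the Euclidean TSP uncrossing argument.
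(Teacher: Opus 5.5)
Your proposal is correct and takes the same approach as the paper, which justifies \Cref{obs:simple_route} only by remarking that one uncrosses the geodesic paths between the convex vertices (cf.\ \Cref{fig:route-cross}). Your 2-opt write-up supplies the details the paper leaves implicit: that interleaving geodesics must meet, that the reconnected route has the same length, and that the number of interleaving chord pairs strictly decreases for points in boundary order, which guarantees termination.
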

\begin{figure}[htb]
	\centering
	\includegraphics[scale=0.8]{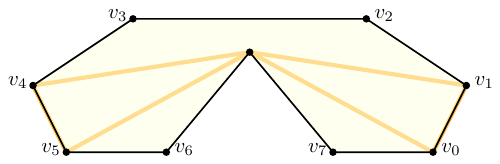}
	\caption{A simple polygon with 8 convex vertices. The route $v_5\rightarrow v_1\rightarrow v_0\rightarrow v_4$ can be uncrossed to obtain the simple path $v_5\rightarrow v_0\rightarrow v_1\rightarrow v_4$.}
	\label{fig:route-cross}
\end{figure}

\begin{proof}[Proof of~\cref{obs:relative_hull_is_shortest}]
	First notice that the route $\H(X)$ must visit all the vertices in $X$, because they are on $\partial\Pol$ and \Pol\ is a simple polygon. 
	
	Let $W$ be a shortest route that visits all the vertices in $X$. 
	By \Cref{obs:simple_route}, we may assume that it visits the vertices in $X$ according to their order along $\partial\Pol$. 
	Consider two vertices $v_a,v_b\in X$ that are consecutive in $X$ according to the order along $\partial\Pol$. 
	Then the path connecting them in $W$ must be the geodesic path $\pi(v_a,v_b)$, because this is the shortest path between them. 
	Applying this argument to any pair of consecutive vertices in $X$, we get that $W$ is exactly $\H(X)$.
\end{proof}

\begin{restatable}{lemma}{obsRCHValid}\label{obs:RCH-is-valid}
    Let $C$ be a route that visits a subset $X\subset V_C$ of the convex vertices of~\Pol, and $D$ be a route that sees another subset $Y\subset V_C$. Then the route $W=\H(C\cup D)$ visits the convex vertices in $X$ and sees the convex vertices in $Y$.
\end{restatable}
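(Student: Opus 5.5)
The plan is to treat the two parts of the statement separately, since they need slightly different arguments. Throughout, I read $W=\H(C\cup D)$ in the sense the paper uses for routes: the boundary of the relative convex hull of the point set $C\cup D$.

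\emph{Visiting $X$.} Let $v\in X$. Then $v\in C\subseteq C\cup D\subseteq \H(C\cup D)$. Because $v$ is a convex vertex of the simple polygon \Pol, it lies on $\partial\Pol$. A point of $\partial\Pol$ that belongs to a geodesically convex subset $H\subseteq\Pol$ must lie on $\partial H$: every small neighbourhood of $v$ contains points outside \Pol, hence outside $H$. So $v$ lies on the boundary of $\H(C\cup D)$, which means $W$ visits $v$. This is the same argument already used in the proof of \Cref{obs:relative_hull_is_shortest}, and I would simply cite it.

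\emph{Seeing $Y$.} I would follow the proof of \Cref{obs:optimal-segment}. Let $v\in Y$ and let $w\in D$ be a point that sees $v$, so $\overline{vw}\subseteq\Pol$. There are two cases.
\begin{enumerate}
\item If $v\in\H(C\cup D)$, then by the previous paragraph $v$ lies on $W$, and $W$ trivially sees $v$.
\item Otherwise $v\notin\H(C\cup D)$ while $w\in\H(C\cup D)$. The segment $\overline{vw}$ is connected, so it must cross $\partial\H(C\cup D)=W$ at some point $w'$. The subsegment $\overline{w'v}\subseteq\overline{wv}\subseteq\Pol$, so $w'\in W$ sees $v$.
\end{enumerate}
It follows that $W$ sees every vertex of $Y$.

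\emph{Main obstacle.} The only delicate step is justifying that the weakly simple, possibly degenerate hull has a well-defined boundary that is hit by a segment leaving it. When $\H(C\cup D)$ has empty interior (for instance, it is a tree of geodesics), its "boundary" is the whole hull traversed as a closed route. In that case $w$ itself already lies on $W$, so the claim holds directly.

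Connectedness of $\overline{vw}$ handles the general case: the set of points of $\overline{vw}$ that lie in the closed set $\H(C\cup D)$ has a last point toward $v$. That point lies on the topological boundary of $\H(C\cup D)$, and the topological boundary is contained in the traversed route $W$.
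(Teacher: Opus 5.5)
Your proof is correct and follows the paper's argument essentially verbatim: vertices of $X$ lie on $\partial\Pol$ and in $\H(C\cup D)$, hence on its boundary $W$. For $v\in Y$ outside the hull, the segment $\overline{vw}$ from a seeing point $w\in D$ must cross $W$, and $W$ then sees $v$ from the crossing point. Your remarks on the degenerate hull and on taking the last point of $\overline{vw}\cap\H(C\cup D)$ toward $v$ just make explicit the topological step the paper leaves implicit.
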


\begin{proof}
	Because $X\subset W$ and \Pol\ is simple, $W$ has to visit the convex vertices in $X$. Consider a vertex $v\in Y$. If $v$ is in the convex hull $\H(C \cup D)$, then again because \Pol\ is simple, $v$ must be on $W$. Otherwise, let $w$ be a point on $D$ that sees $v$. Then because $\H(C \cup D)$ contains the route $D$, and $v$ is not in $\H(C \cup D)$, the route $W$ must cross the segment $\overline{vw}$, and therefore it sees $v$ from the crossing point.
\end{proof}

Let $v_i,v_j\in V_C$ be two different convex vertices and let $C_{ij}=\H(\{v_i,v_{i+1},\ldots,v_{j-1}\})$ be a shortest route that visits the convex vertices $v_i,v_{i+1},\ldots,v_{j-1}$ as guaranteed by \Cref{obs:relative_hull_is_shortest}. The route ${C_{ji}=\H(\{v_j,v_{j+1}\ldots,v_{i-1}\})}$ is then a shortest route that visits the convex vertices $v_j,v_{j+1}\ldots,v_{i-1}$. \cref{fig:C-and-D-and-W,fig:C-and-D-and-W-2} exemplify the routes defined in this subsection. 
Denote $C_P=\partial \Pol$, and note that it is the shortest route that visits all convex vertices of~\Pol\!.

Let $D_{ij}$ be the shortest route that 
starts and ends at $v_i$, and that 
sees all the convex vertices $v_j,\ldots,v_{i-1}$. The route $D_{ji}$ is then the shortest route that 
starts and ends at $v_j$, and that 
sees all the convex vertices $v_i,\ldots,v_{j-1}$. 
Let $D_{\Pol}$ be the shortest floating watchman route in {\Pol} (that is, the shortest watchman route without a given starting point).

\begin{figure}[htb]
    \centering
    \captionsetup[subfigure]{justification=centering}%
    \centering
    \begin{subfigure}{0.33\columnwidth}%
    \centering
        \includegraphics[page=2,scale=0.85]{figures/2-approx-relabelled.pdf}%
        \caption{}%
        \label{fig:two-approx-a}
    \end{subfigure}%
        \begin{subfigure}{0.33\columnwidth}%
    \centering
        \includegraphics[page=1,scale=0.85]{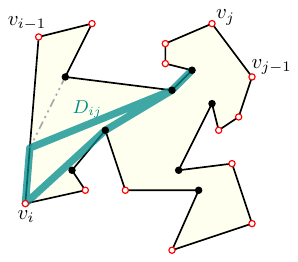}%
        \caption{}%
        \label{fig:two-approx-b}
    \end{subfigure}%
    \begin{subfigure}{0.33\columnwidth}%
    \centering
        \includegraphics[page=3,scale=0.85]{figures/2-approx-relabelled.pdf}%
        \caption{}%
        \label{fig:two-approx-c}
    \end{subfigure}%
    \caption{Examples of the routes used in~\cref{sec:appx-segm}:
    (a) $C_{ij}=\H(\{v_i,v_{i+1},\ldots,v_{j-1}\})$ denotes a shortest route that visits the convex vertices $v_i,v_{i+1},\ldots,v_{j-1}$. (b) $D_{ij}$ is a shortest route that starts and ends at $v_i$ and that sees all the convex vertices $v_j,\ldots, v_{i-1}$. (c) depicts $W_{ij}= {\rm H}(C_{ij} \cup D_{ij})$.
   }
    \label{fig:C-and-D-and-W}
\end{figure}

\begin{figure}[htb]
    \centering
    \captionsetup[subfigure]{justification=centering}%
    \centering
    \begin{subfigure}{0.33\columnwidth}%
    \centering
        \includegraphics[page=5,scale=0.85]{figures/2-approx-relabelled.pdf}%
        \caption{}%
        \label{fig:two-approx-2-a}
    \end{subfigure}%
        \begin{subfigure}{0.33\columnwidth}%
    \centering
        \includegraphics[page=4,scale=0.85]{figures/2-approx-relabelled.pdf}%
        \caption{}%
        \label{fig:two-approx-2-b}
    \end{subfigure}%
    \begin{subfigure}{0.33\columnwidth}%
    \centering
        \includegraphics[page=6,scale=0.85]{figures/2-approx-relabelled.pdf}%
        \caption{}%
        \label{fig:two-approx-2-c}
    \end{subfigure}%
    \caption{Complementary routes to those shown in~\cref{fig:C-and-D-and-W}: (a) $C_{ji}$, (b) $D_{ji}$, and (c) $W_{ji}$.
   }
    \label{fig:C-and-D-and-W-2}
\end{figure}

\pagebreak
We define $W_{ij}\defeq {\rm H}(C_{ij} \cup D_{ij})$, i.e., $W_{ij}$ is the relative convex hull of the route obtained by connecting 
the two routes $C_{ij}$ and $D_{ij}$ at $v_i$. 
Then, we construct our approximate solution by choosing the pair 
\[
(W_{1},W_{2})
=
\argmin_{i\neq j} \big\{ \max\{\vert W_{ij}\vert,\vert W_{ji}\vert\}, \max\{\vert C_{\Pol}\vert,\vert D_{\Pol}\vert\} \big\}.
\]

By \cref{lem:conditions,obs:RCH-is-valid}, the pair $(W_{1},W_{2})$ constitutes a feasible solution to the segment watchman routes problem. 

For two routes $W_1$ and $W_2$, the subsequent~\cref{lem:twoapprox,lem:twoapprox-min-sum} establish upper bounds relative to the length of an optimal solution for \Pol\ on both the longer of $W_1$ and $W_2$, and on their sum.
We start with the min-max objective. 
For this, let $\OPT(\Pol)$ denote the length of an optimal solution for a polygon \Pol\ under the min-max criterion.

\begin{lemma}\label{lem:twoapprox}
$\max\big\{\vert W_{1}\vert,\vert W_{2}\vert\big\} \leq 2 \cdot {\rm\OPT}(\Pol)$.
\end{lemma}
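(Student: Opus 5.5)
Take an optimal min-max solution $W_1^*,W_2^*$ which, by \cref{obs:optimal-segment}, we may assume satisfies Conditions~(\ref{cond1})--(\ref{cond3}) of \cref{lem:conditions}. Thus both routes are relatively convex watchman routes, every convex vertex is visited by at least one of them, and $\max\{|W_1^*|,|W_2^*|\}=\OPT(\Pol)$. The plan is to exhibit one candidate in the $\argmin$ whose longer route has length at most $2\cdot\OPT(\Pol)$. Since the algorithm picks the minimum over all candidates, this is enough.

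\textbf{Case 1: one route visits every convex vertex.} Say $W_1^*$ does. Relative convexity is not even needed here: by \cref{obs:relative_hull_is_shortest}, $|C_\Pol|\le|W_1^*|$, since $C_\Pol=\partial\Pol$ is the shortest route visiting all of $V_C$. Also $|D_\Pol|\le|W_2^*|$, because $W_2^*$ is a watchman route (\cref{obs:watchmen-routes}) and $D_\Pol$ is a shortest floating watchman route. Hence $\max\{|C_\Pol|,|D_\Pol|\}\le\OPT(\Pol)$.

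\textbf{Case 2: both routes miss some convex vertex.} Let $X_1$ and $X_2$ be the sets of convex vertices visited by $W_1^*$ and $W_2^*$. Because $W_1^*$ is relatively convex in a simple polygon, I expect the vertices of $X_1$ to lie on $\partial\Pol$ along $W_1^*$ in boundary order. The key structural claim is that there are indices $i\neq j$ such that $\{v_i,\dots,v_{j-1}\}\subseteq X_1$ and $\{v_j,\dots,v_{i-1}\}\subseteq X_2$. In words, the convex vertices split into two contiguous arcs, each visited by one route. To prove this, suppose the pattern alternates, with $a,c$ visited only by $W_1^*$ and $b,d$ visited only by $W_2^*$ in cyclic order. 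Then $W_1^*$ contains $\pi(a,c)$ and $W_2^*$ contains $\pi(b,d)$, and these geodesics cross. I would then take a vertex $b\notin X_1$ lying strictly between $a$ and $c$ on the side cut off by $H(W_1^*)$. That vertex lies in a pocket of $H(W_1^*)$, and relative convexity together with Condition~(\ref{cond2}) should force $b$ to be visible from the window $\pi(a,c)$. I would derive a contradiction by showing that some point near $b$ cannot be segment-guarded, because both routes see it from the same side; this parallels the wedge argument in the proof of \cref{lem:conditions}. If the contiguity claim turns out to be false in general, the fallback is to choose $i,j$ anyway and bound $|C_{ij}|$ by a geodesic portion of $W_1^*$ plus a portion of $W_2^*$. \textbf{This step is the main obstacle.}

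\textbf{Bounding the candidate.} With $i,j$ chosen, $C_{ij}=\H(\{v_i,\dots,v_{j-1}\})$. Since these vertices lie in $X_1$, \cref{obs:relative_hull_is_shortest} gives $|C_{ij}|\le|W_1^*|$. For $D_{ij}$, which starts and ends at $v_i$ and sees $v_j,\dots,v_{i-1}$, note that $W_2^*$ is a watchman route. So the route that walks from $v_i$ to $W_2^*$, follows $W_2^*$ once around, and returns sees everything. To charge the connecting walk, I would use the fact that $v_i\in W_1^*$ and that $W_1^*$, being a watchman route, meets the visibility polygon of every convex vertex. A cleaner option is to let $D_{ij}$ follow $W_1^*$ itself, which already passes through $v_i$ and sees all of $\Pol$; this gives $|D_{ij}|\le|W_1^*|$.

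\textbf{Combining.} Since $W_{ij}=\H(C_{ij}\cup D_{ij})$ and a relative convex hull is no longer than the route it encloses, $|W_{ij}|\le|C_{ij}|+|D_{ij}|\le 2|W_1^*|$. The same argument, with $D_{ji}$ following $W_2^*$ through $v_j$, gives $|W_{ji}|\le 2|W_2^*|$. Both bounds are at most $2\cdot\OPT(\Pol)$, which proves the lemma.
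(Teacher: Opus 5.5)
Your Case~1 and your treatment of the contiguous case match the paper. There, $|C_{ij}|\le|W_1^*|$ by the shortest-route lemma and $|D_{ij}|\le|W_1^*|$ because $W_1^*$ passes through $v_i$ and sees everything, hence $|W_{ij}|\le 2|W_1^*|$, and symmetrically $|W_{ji}|\le 2|W_2^*|$.

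The gap is the structural claim in Case~2. It is false, so it cannot be proved. No contradiction with segment-guarding is available: the Conditions Lemma shows that two relatively convex routes satisfying Conditions (i) and (ii) are segment watchman routes whether or not their visited vertex sets interleave. Moreover, optimality can force interleaving. Take a plus-shaped polygon with arms of small width $\delta$ and lengths $0.7$ (north), $0.5$ (east), $0.3$ (south), $0.5$ (west). Every watchman route must pass within $O(\delta)$ of the central square to see all arm tips. So a route visiting the tips of a set $S$ of arms has length at least twice the total length of $S$, up to $O(\delta)$. The split north/south versus east/west, realized by the boundaries of the two bars, has maximum about $2$, whereas every contiguous split costs at least about $2.4$. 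The paper's \textsc{Partition} reduction, where each watchman takes an arbitrary subset of spikes, behaves the same way. Your fallback does not rescue the factor either: charging $|C_{ij}|$ to $|W_1^*|+|W_2^*|$ and adding $|D_{ij}|\le|W_1^*|$ yields only $3\cdot\OPT(\Pol)$.

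The missing idea is to use the other candidate in the $\argmin$, the pair $(C_{\Pol},D_{\Pol})$, in the interleaved case, not only when one route visits all of $V_C$. You already have the key observation. Suppose $a,c$ are visited only by $W_1^*$ and $b,d$ only by $W_2^*$, in this cyclic order along $\partial\Pol$. Then a subpath of $W_1^*$ from $a$ to $c$ and a subpath of $W_2^*$ from $b$ to $d$ must cross in the simple polygon. So $W_1^*\cup W_2^*$ is a connected closed walk that, by Condition (i), visits every convex vertex. Hence $|C_{\Pol}|\le|W_1^*|+|W_2^*|\le 2\cdot\OPT(\Pol)$. Also $|D_{\Pol}|\le\min\{|W_1^*|,|W_2^*|\}\le\OPT(\Pol)$, since both routes are watchman routes. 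This is exactly the paper's second case; the paper splits on whether $W_1^*$ and $W_2^*$ intersect rather than on contiguity. With your split into contiguous versus interleaved, this replacement makes the proof complete.
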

\begin{proof}
Let $W_1^*$ and $W_2^*$ be two segment watchman routes with ${\max\big\{\vert W_1^*\vert,\vert W_2^*\vert\big\}=\OPT(\Pol)}$. 
Without loss of generality, we may assume that $W_1^*$ and $W_2^*$ are as short as possible.

If $W_1^*$ or $W_2^*$ visits all convex vertices of \Pol, then $(C_{\Pol},D_{\Pol})$ is an optimal solution to the problem and the theorem therefore holds. Hence, for the remainder of this proof, we assume that $W_1^*$ visits some fixed convex vertex $v_i$ and $W_2^*$ visits a different fixed convex vertex~$v_j$.

Since $W_1^*$ visits $v_{i}$ and it either sees or visits the convex vertices $v_{j},\ldots,v_{i-1}$ by construction, we have that $\vert D_{ij}\vert \leq\vert W_1^*\vert $. 
Similarly, $W_2^*$ visits $v_{j}$ and it either sees or visits the convex vertices $v_{i},\ldots,v_{j-1}$, yielding $\vert D_{ji}\vert \leq\vert W_2^*\vert $.
We distinguish the following cases.

\medskip
\begin{description}
\item[$W_1^*$ and $W_2^*$ do not intersect.]

Because $W_1^*$ and $W_2^*$ do not intersect, the two convex vertices $v_{i}$ and $v_{j}$ can be chosen so that $W_1^*$ visits $v_{i},\ldots,v_{j-1}$ by increasing index (modulo~$t$) and sees the remaining ones, whereas $W_2^*$ visits $v_{j},\ldots,v_{i-1}$ and sees the remaining ones. 
From this, it follows that $\vert C_{ij}\vert \leq\vert W_1^*\vert $ and $\vert C_{ji}\vert \leq\vert W_2^*\vert $. 
We obtain that
\begin{equation}\label{eq:2approx1}
    \begin{split}
\max\big\{\vert W_{1}\vert ,\vert W_{2}\vert \big\}
&\leq
\max\big\{\vert {\rm H}(C_{ij}\cup D_{ij})\vert ,\vert {\rm H}(C_{ji}\cup D_{ji})\vert \big\}\\
&\leq
\max\big\{2\vert W_1^*\vert ,2\vert W_2^*\vert \big\}
=
2\cdot\max\big\{\vert W_1^*\vert ,\vert W_2^*\vert \big\}
=
2\cdot\OPT(\Pol).
\end{split}
\end{equation}

\item[$W_1^*$ and $W_2^*$ intersect.]
Because $W_1^*$ and $W_2^*$ intersect and together visit all the vertices, we have $\vert C_{\Pol}\vert \leq \vert W_1^*\cup W_2^*\vert =\vert W_1^*\vert +\vert W_2^*\vert $ and $\vert D_{\Pol}\vert \leq \min\{\vert W_1^*\vert ,\vert W_2^*\vert \}$, as both $W_1^*$ and~$W_2^*$ are watchman routes. 
We obtain that
\begin{equation}\label{eq:2approx2}
\begin{split}
\max\big\{\vert W_{1}\vert ,\vert W_{2}\vert \big\}
&\leq
\max\big\{\vert C_{\Pol}\vert ,\vert D_{\Pol}\vert \big\}
\leq
\max\big\{\vert W_1^*\vert +\vert W_2^*\vert ,\min\{\vert W_1^*\vert ,\vert W_2^*\vert \}\big\}\\
&\leq
2\cdot\max\big\{\vert W_1^*\vert ,\vert W_2^*\vert \big\}
=
2\cdot \OPT(\Pol).
\end{split}
\end{equation}
\end{description}

Since the case distinction is exhaustive, this concludes the proof.
\end{proof}

Note that in fact, for the min-max criterion we may also let $W_{2}=C_{\Pol}$ to avoid computing a floating shortest watchman route.
We proceed with the min-sum objective. 
For this, let~$\OPT^\Sigma(\Pol)$ denote the size of an optimal solution for~\Pol\ under the min-sum criterion.
\begin{lemma}\label{lem:twoapprox-min-sum}
$\vert W_{1}\vert+\vert W_{2}\vert \leq 2 \cdot {\OPT^\Sigma}(\Pol)$.
\end{lemma}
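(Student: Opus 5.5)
We mirror the proof of \cref{lem:twoapprox}, but now charge against the sum. Let $W_1^*,W_2^*$ be segment watchman routes with $|W_1^*|+|W_2^*|=\OPT^\Sigma(\Pol)$. By \cref{obs:optimal-segment} we may assume they are relatively convex, and by \cref{obs:optimal-segment-1-2} every convex vertex is visited by one of them and both are watchman routes. The algorithm chooses the pair minimizing the max criterion; for min-sum we compare against the candidates with respect to the sum instead, i.e.\ we take the argmin of $|W_{ij}|+|W_{ji}|$ over $i\ne j$ together with $|C_\Pol|+|D_\Pol|$. It therefore suffices to exhibit one candidate whose sum is at most $2\,\OPT^\Sigma(\Pol)$.

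\textbf{Case 1: one route visits all convex vertices.} Say $W_1^*$ does. Then $|C_\Pol|\le |W_1^*|$ by \cref{obs:relative_hull_is_shortest}, and $|D_\Pol|\le |W_2^*|$ since $W_2^*$ is a watchman route. Hence the pair $(C_\Pol,D_\Pol)$ has sum at most $\OPT^\Sigma(\Pol)$.

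\textbf{Case 2: $W_1^*$ and $W_2^*$ intersect (and neither visits all convex vertices).} Their union is a connected closed walk visiting all convex vertices, so $|C_\Pol|\le|W_1^*|+|W_2^*|$. Also $|D_\Pol|\le\min\{|W_1^*|,|W_2^*|\}$. This gives a sum of at most $\tfrac32\,\OPT^\Sigma(\Pol)$.

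\textbf{Case 3: $W_1^*$ and $W_2^*$ are disjoint.} This is the main step. As in \cref{lem:twoapprox}, I would argue that $v_i,v_j$ can be chosen so that $W_1^*$ visits exactly $v_i,\dots,v_{j-1}$ and $W_2^*$ visits $v_j,\dots,v_{i-1}$. Disjoint relatively convex routes in a simple polygon occupy separated regions, so the sets of visited vertices form two contiguous arcs of $\partial\Pol$; I expect this to be the only delicate point and would justify it with the separating-chord argument used in \cref{lem:conditions}. Then $|C_{ij}|\le|W_1^*|$ and $|C_{ji}|\le|W_2^*|$. Moreover $|D_{ij}|\le|W_1^*|$, since $W_1^*$ passes through $v_i$ and is a watchman route, and similarly $|D_{ji}|\le|W_2^*|$. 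The relative convex hull of the closed walk $C_{ij}\cup D_{ij}$ (joined at $v_i$) is no longer than the walk, so $|W_{ij}|\le|C_{ij}|+|D_{ij}|\le 2|W_1^*|$ and $|W_{ji}|\le 2|W_2^*|$. Summing gives $|W_{ij}|+|W_{ji}|\le 2\,\OPT^\Sigma(\Pol)$.

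The three cases are exhaustive. In every case the sum-minimizing candidate pair is feasible by \cref{lem:conditions,obs:RCH-is-valid}, and its sum is at most $2\,\OPT^\Sigma(\Pol)$.
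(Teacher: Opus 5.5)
Your proof is correct and follows essentially the same argument as the paper. You use the same case split: the intersecting case bounds $|C_\Pol|+|D_\Pol|$ by $\tfrac32\,\OPT^\Sigma(\Pol)$, the disjoint case uses $|C_{ij}|,|D_{ij}|\le|W_1^*|$ and $|C_{ji}|,|D_{ji}|\le|W_2^*|$, and the all-vertices case is handled as in \cref{lem:twoapprox}. You also state explicitly that for the min-sum objective the candidate pair must be chosen by minimizing the sum rather than the maximum; the paper's first inequality relies on this but never says it, so this is a worthwhile clarification.
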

\begin{proof}
    The proof proceeds along the same lines as that of \Cref{lem:twoapprox}. 
    In particular, it suffices to replace \Cref{eq:2approx1} and \Cref{eq:2approx2} with the following two equations, respectively.
    
    \begin{description}
        \item[$W_1^*$ and $W_2^*$ do not intersect.]
            \[
                \vert W_{1}\vert+\vert W_{2}\vert \leq \vert {\rm H}(C_{ij}\cup D_{ij})\vert +\vert {\rm H}(C_{ji}\cup D_{ji})\vert \le 2\cdot(\vert W_1^*\vert +\vert W_2^*\vert) = 2\cdot\OPT^\Sigma(\Pol).
            \]
        \item[$W_1^*$ and $W_2^*$ intersect.]
            \[
                \vert W_{1}\vert +\vert W_{2}\vert \leq \vert C_{\Pol}\vert + \vert D_{\Pol}\vert \leq \vert W_1^*\vert +\vert W_2^*\vert +\min\{\vert W_1^*\vert ,\vert W_2^*\vert \} \leq \nicefrac{3}{2}\cdot \OPT^\Sigma(\Pol).\qedhere
            \]
    \end{description}
\end{proof}

Using~\cref{lem:twoapprox,lem:twoapprox-min-sum}, we obtain the following theorem.
\begin{theorem}\label{thm:twoapprox}
    Given a simple polygon \Pol\ with $n$ vertices, out of which $t$ are convex, one can compute in $\mathcal{O}(n^4+t^2n^3)$ time a 2-approximation for the segment watchman routes problem under both criteria.
\end{theorem}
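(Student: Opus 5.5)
The approximation ratio follows directly from \Cref{lem:twoapprox,lem:twoapprox-min-sum}, and feasibility of the returned pair from \Cref{lem:conditions,obs:RCH-is-valid}. What remains is to bound the running time of computing all candidates $W_{ij}$, $W_{ji}$, $C_{\Pol}$, $D_{\Pol}$ and selecting the best pair. For $C_{\Pol}$ we output $\partial\Pol$ in $\mathcal{O}(n)$ time. For $D_{\Pol}$ we use the known polynomial-time algorithm for the shortest floating watchman route in a simple polygon~\cite{Tan01a,TanJ17}, which runs in $\mathcal{O}(n^4)$ time (or less). This accounts for the $n^4$ term.

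For each of the $\mathcal{O}(t^2)$ ordered pairs $(i,j)$ with $i\neq j$, we compute three objects.
\begin{enumerate}
\item $C_{ij}=\H(\{v_i,\dots,v_{j-1}\})$. By the proof of \Cref{obs:relative_hull_is_shortest}, this is the concatenation of geodesics between consecutive vertices. After $\mathcal{O}(n)$ triangulation preprocessing we can build it in $\mathcal{O}(n)$ time with a standard relative convex hull routine for points on $\partial\Pol$.
\item $D_{ij}$, the shortest anchored route from $v_i$ that sees $v_j,\dots,v_{i-1}$. By \Cref{clm:cond2}, seeing a convex vertex amounts to visiting its visibility polygon. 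So we reduce to an anchored watchman-route computation in which only the essential cuts induced by the convex vertices $v_j,\dots,v_{i-1}$ are kept. Anchored watchman routes in simple polygons can be computed in $\mathcal{O}(n^3)$ time (e.g., by Tan's algorithm), and restricting the set of cuts does not increase this cost.
\item $W_{ij}=\H(C_{ij}\cup D_{ij})$. Both routes have $\mathcal{O}(n)$ complexity, so this relative convex hull takes $\mathcal{O}(n\log n)$ or $\mathcal{O}(n)$ time.
\end{enumerate}
Each pair therefore costs $\mathcal{O}(n^3)$ time, giving $\mathcal{O}(t^2n^3)$ in total. Taking the minimum over all candidates costs only $\mathcal{O}(t^2)$ extra comparisons. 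The whole algorithm runs in $\mathcal{O}(n^4+t^2n^3)$ time.

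The main obstacle is step~2: justifying that a watchman-route algorithm can be restricted to seeing only a prescribed subset of convex vertices while keeping the $\mathcal{O}(n^3)$ anchored bound. I would argue that the anchored algorithm operates on the set of essential cuts, namely the boundaries of the visibility regions of convex vertices, and works for any subset of cuts. The shortest route visiting a family of such regions from a fixed anchor is then obtained by the same reflection/unfolding machinery applied only to the chosen cuts. One must also check that the cuts of $v_j,\dots,v_{i-1}$ are not dominated incorrectly by cuts of omitted vertices, which holds simply because omitted cuts are discarded. If this reduction proves delicate, a fallback is to treat the subpolygon-independent definition directly: still run the general anchored algorithm, but with the cut set given explicitly as input.
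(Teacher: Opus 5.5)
Your proposal is correct and follows the paper's proof essentially step for step. Feasibility comes from \Cref{lem:conditions} together with \Cref{obs:RCH-is-valid}, and the ratio from \Cref{lem:twoapprox,lem:twoapprox-min-sum}. $D_{\Pol}$ takes $\mathcal{O}(n^4)$ time. Each $C_{ij}$ takes linear time: the paper writes it as $\partial\Pol[v_i,v_{j-1}]$ closed by $\pi(v_i,v_{j-1})$, which is your concatenation, since the geodesic between consecutive convex vertices is the boundary chain. Each of the $\mathcal{O}(t^2)$ routes $D_{ij}$ takes $\mathcal{O}(n^3)$ time via a modified anchored watchman-route algorithm. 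The paper justifies that last bound only by saying it follows from modifying the Jiang--Tan algorithm, so your discussion of restricting the algorithm to the cuts of the prescribed convex vertices is at least as detailed as the original.
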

\begin{proof}
    Feasibility follows directly from \Cref{lem:conditions}, while the approximation guarantees are established by \Cref{lem:twoapprox,lem:twoapprox-min-sum}.
    
     Moreover, we can compute $D_{\Pol}$ in $\mathcal{O}(n^4)$ time~\cite{Tan01a,TanJ17}. 
     The route $C_{ij}$ is a concatenation of $\partial\Pol[v_i,v_{j-1}]$ and the geodesic $\pi(v_i,v_{j-1})$, and can thus be computed in linear time for any $i,j\in 0,\dots,t-1$.
    The routes $D_{ij}$ can be computed in~$\mathcal{O}(n^3)$ time by modifying the algorithm of Jiang and Tan~\cite{TanJ17}. 
    Since there are $\mathcal{O}(t^2)$ such routes (all possible pairs of convex vertices), the total running time for computing all those routes~$W_{ij}$ is~$\mathcal{O}(t^2n^3)$.
\end{proof}

\section{Computational Complexity}\label{sec:complexity}
We now study the computational complexity of computing shortest segment watchman routes. 
We show that the problem is weakly \NP-hard under the min-max objective, even when restricted to simple polygons. 
For the min-sum objective, we demonstrate that the problem is \NP-hard in polygons with holes.

\subsection{Min-Max Segment Watchman Routes in Simple Polygons}\label{subsec:compl-min-max-segm}

We reduce from \textsc{Multiway Number Partitioning}~\cite{GareyJ79}. 
In particular, we consider the problem of partitioning a set of numbers into two subsets with equal sum. 
This problem, commonly referred to as \textsc{Partition}, is known to be weakly \NP-hard.

\begin{restatable}{theorem}{theoremMinMaxSegmentHardness}
    \label{thm:segment-hardness}
    The \textsc{Min-Max Segment Watchman Routes Problem} is weakly \NP-hard, even in simple polygons.
\end{restatable}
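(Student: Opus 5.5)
We reduce from \textsc{Partition}: given positive integers $a_1,\dots,a_m$ with $\sum_k a_k = 2B$, decide whether some subset sums to $B$. From such an instance we build, in polynomial time, a simple polygon $\Pol$ and a threshold $L$, and show that segment watchman routes $W_1,W_2$ with $\max\{|W_1|,|W_2|\}\le L$ exist if and only if the instance is a yes-instance. Since the coordinates are polynomial in the numeric values $a_k$, this yields weak \NP-hardness.

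\textbf{Construction.} Take a central convex ``hub'' region, for example a long rectangle, and attach $m$ narrow spikes (thin pockets) to it. Spike $k$ has length proportional to $a_k$ and ends in a convex tip vertex $u_k$. The spikes are placed around the hub so that from a common central point every spike can be \emph{seen} all the way into its tip, but only by a route that stays near the hub. A concrete layout is a star-shaped polygon whose spikes point radially from a centre $c$, so that $c$ sees every convex vertex. By \cref{obs:optimal-segment-1-2}, every convex vertex must be \emph{visited} by $W_1$ or $W_2$, and in particular every tip $u_k$ must be. Condition~(\ref{cond2}) costs little because $c$ sees everything. Entering spike $k$ and returning costs $2\ell_k$, where $\ell_k = \lambda a_k$ for a large scale factor $\lambda$. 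The remaining convex vertices, namely those at the spike bases and the hub corners, are arranged to lie near $c$, so that any route visiting them, together with the trip to $c$, costs a small fixed amount $\varepsilon$. Set $L = 2\lambda B + \delta$, where $\delta$ is the total base cost.

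\textbf{Correctness.} For the forward direction, split the spikes according to the partition. Let $W_1$ be the relative convex hull of its tips together with the central vertices, and define $W_2$ analogously. Both routes are relatively convex, both see every convex vertex (each passes near $c$), and together they visit all convex vertices. \Cref{lem:conditions} then certifies that they are segment watchman routes, and each route has length at most $2\lambda B + \delta$. To make the central vertices shared by both routes, let each route visit all of them; each route then pays $\delta$.

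For the backward direction, by \cref{obs:optimal-segment-1-2} each tip $u_k$ is visited by some route. A route that visits the tips in a set $S$ has length at least $\sum_{k\in S} 2\lambda a_k$, because the spikes are narrow and disjoint and reaching each tip requires travelling in and out of its spike. If $\lambda$ is chosen so that $2\lambda$ exceeds $\delta$ by a margin that absorbs geometric error, then $\max_i \sum_{k\in S_i} a_k \le B + \delta/(2\lambda) < B+1$ follows. Integrality then gives a partition with sums exactly $B$.

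\textbf{Main obstacle.} The hard part is controlling the error terms. A route that visits several spike tips may save length by cutting across between adjacent spike bases, and the in-and-out lower bound must be made exact up to an additive term strictly smaller than $2\lambda$. My first attempt would make the spikes very thin and place all the base vertices within a tiny disc around $c$. A route visiting tips $S$ then has length at least $\sum_{k\in S} 2(\ell_k - r)$, where $r$ is the disc radius. Choosing $r < 1/(4m)$ with $\lambda = 1$ (and integral $a_k$), together with $\delta \le 1/2$, makes the slack smaller than $1$. Finally, a simple polygon in which all spikes radiate from a tiny disc must still have positive area near $c$, and it has to be checked that no reflex-vertex configuration forces extra visibility cost. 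The star shape should take care of both points.
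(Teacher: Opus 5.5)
Your proposal is correct and takes essentially the paper's approach. Both reduce from \textsc{Partition} to a star-shaped polygon whose thin spikes encode the numbers: Condition~(\ref{cond1}), which is necessary by \cref{obs:optimal-segment-1-2}, forces each tip onto one of the two routes, a kernel point makes Condition~(\ref{cond2}) nearly free, and \cref{lem:conditions} certifies the forward direction. Your explicit disc-radius and integrality bookkeeping is more careful than the paper's ``ratio as large as needed'' argument.

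One detail needs tightening. Passing \emph{near} $c$ does not by itself let a route see the tips of very thin spikes. Put $c$ into the generating set of each relative convex hull (or, as the paper does with $v_0$, a convex kernel vertex). Condition~(\ref{cond2}) then follows exactly as in \cref{obs:RCH-is-valid}, and your length bound is unchanged.
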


\begin{proof}
	We reduce from the \NP-complete \textsc{Partition Problem}~\cite{GareyJ79}. 
	For any given instance $\varphi = \{\alpha_1,\alpha_2,\dots,\alpha_n\}$ consisting of $n$ numbers, we construct a simple polygon $\Pol_\varphi$ with $2n+2$ vertices as an instance of the \textsc{Min-Max Segment Watchman Routes Problem} as follows; we refer to~\cref{fig:hardness-polygon} for an example.
	
	\begin{figure}[htb]
		\centering
		\includegraphics{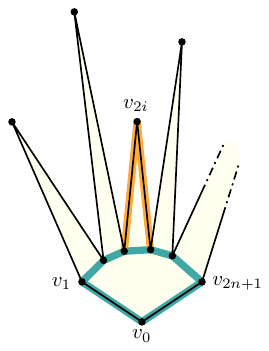}
		\caption{High-level idea of the type of polygon utilized in the \NP-hardness reduction.}
		\label{fig:hardness-polygon}
	\end{figure}
	
	The polygon $\Pol$ is star-shaped with the bottommost vertex $v_0$ in the kernel.
	Every $\alpha_i\in \varphi$ gives rise to a spike with convex vertex $v_{2i}$.
	For every $v_{2i}$, the corresponding spike's boundary consisting of the straight-line segments $\overline{v_{2i-1}v_{2i}}$ and $\overline{v_{2i}v_{2i+1}}$ has summed length $\alpha_i$; each segment having length $\alpha_i / 2$.
	We construct each spike so that its width (i.e., the distance between the vertices $v_{2i-1}$ and $v_{2i+1}$) is in $O(\varepsilon)$, where $\varepsilon$ denotes the length of the tour $v_0, v_1, v_3, \dots, v_{2i+1}, \dots, v_{2n+1}, v_0$, i.e., the tour starting and ending at $v_0$ and visiting all vertices of odd index in increasing~order.
	By scaling the instances, we can make $\varepsilon$ as small as needed.
	With this, we can guarantee that the ratio between any $\alpha_i$ and $\varepsilon$ is as large as~needed.
	
	Let the sum of the values from $\varphi$ equal $T$.
	It remains to argue that a partition of $\varphi$ into two sets of (nearly) equal sum (i.e., $T/2$) exists if and only if there are two watchman routes $W_1$, $W_2$ for $\Pol_\varphi$ such that every point within $\Pol_\varphi$ is segment-guarded, and the lengths of the routes are at most $T/2+\varepsilon$.
	
	\begin{claim}
		There exists a partition of $\varphi$ into two subsets $\varphi_1, \varphi_2$ of size $T/2$ each if there are two segment watchman routes $W_1, W_2$ of min-max length at most $T/2 + \varepsilon$.
	\end{claim}
	\begin{claimproof}
		Consider a solution to the min-max segment watchman route problem consisting of the tours $W_1$ and $W_2$, and assume that $W_1$ is the longer one, having length at most $T/2+\varepsilon$. 
		Clearly, the length of each spike's boundary corresponding to vertex $v_{2i}$ with $i\in [1,n]$ visited by $W_1$ adds up to $T/2$, as the ratio between $\varepsilon$ and each $\alpha_i$ is as large as needed.
		The total length of the polygons boundary is $T+\delta$ where $\delta$ denotes the length of the straight-line segments $\overline{v_0v_1}$ and $\overline{v_0v_{2n+1}}$ and $\delta\in O(\varepsilon)$.
		As every convex must be visited by at least one tour, all convex vertices $v_{2i}$ not visited by $W_1$ must be visited by $W_2$.
		However, the respective polygon boundary also adds up to $T/2$. 
		Therefore, $W_1$ and $W_2$ yield a solution to the partition instance.
	\end{claimproof}
	
	\begin{claim}
		There are two segment watchman routes $W_1, W_2$ of min-max length at most $T/2 + \varepsilon$, if there exists a partition of $\varphi$ into two subsets $\varphi_1, \varphi_2$ of size $T/2$ each.
	\end{claim}
	\begin{claimproof}
		Consider a partition of $\varphi$ into two subsets $\varphi_1, \varphi_2$ of size $T/2$. 
		Without loss of generality, we assign $\varphi_1$ to the first watchman as follows. 
		The watchman starts at the vertex $v_0$, and visits the convex vertex $v_{2i}$ corresponding to $\alpha_i$ for every $\alpha_i\in \varphi_1$ in increasing order of the index, finally returning back to $v_0$.
		Note that if $\alpha_1 \in \varphi_1$, the watchman also visits the convex vertex $v_1$; if $\alpha_{2n} \in \varphi_1$ it also visits the convex vertex $v_{2n+1}$.
		The subset $\varphi_2$ is analogously assigned to the second watchman.
		This yields a min-max segment watchman route of length at most $T/2+\varepsilon$.
	\end{claimproof}
	These two claims conclude the proof.
\end{proof}

\subsection{Min-Sum Segment Watchman Routes in Polygons with Holes}\label{subsec:compl-min-sum-kgon}

For polygons with holes, we lack conditions as those in~\cref{lem:conditions} for simple polygons. 
We overcome this in our \NP-hardness proof, which is based on a reduction from the \textsc{Euclidean Traveling Salesman Problem}, by constructing feasible solutions whose lengths sharply distinguish optimal from non-optimal TSP tours. 
As a result, computing an optimal segment watchman route necessitates solving the TSP instance.

\pagebreak
\begin{theorem}
    \label{thm:min-sum-hardness}
    The \textsc{Min-Sum Segment Watchman Routes Problem} is \NP-hard in polygons with holes.
\end{theorem}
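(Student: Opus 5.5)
We reduce from the \textsc{Euclidean Traveling Salesman Problem}, which is \NP-hard even for points with integer coordinates. Given $N$ points $q_1,\dots,q_N$ in the plane, we build a polygon with holes $\Pol$ that is a large axis-parallel rectangle $R$ with, for each $q_k$, a small gadget at $q_k$. Each gadget is a tiny hole (say a thin convex or diamond-shaped obstacle of diameter $\delta\ll 1/N^2$) placed so that a dead-end ``probe'' region near $q_k$ can only be segment-guarded if some watchman actually comes within distance $O(\delta)$ of $q_k$. The intended solution uses one route $W_1$ that is essentially an optimal TSP tour through the $q_k$ with a cheap local detour of length $O(\delta)$ at each gadget. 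The other route $W_2$ is a fixed route of length $L_0$ (for instance, running close to $\partial R$ and around each gadget in a prescribed way) whose existence and cost do not depend on the tour. Points guarded ``from the inside'' by $W_1$ and ``from the outside'' by $W_2$ then lie on segments $\overline{w_1w_2}$.

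The plan has three steps.

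\textbf{Completeness.} For a TSP tour $T$ of length $L$, we exhibit segment watchman routes of total length at most $L_0+L+cN\delta$. This means checking every point of $\Pol$, which we split into three cases: points in the open area of $R$, points inside the gadget pockets, and points behind holes. In each case we name $w_1\in W_1$ and $w_2\in W_2$ on opposite sides. For the open area, the key fact is that $W_2$ hugs $\partial R$, so any ray from $p$ through a visible point of $W_1$ continued backward hits $W_2$.

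\textbf{Soundness.} We show that any segment watchman routes of total length at most $L_0+L+cN\delta$ yield a TSP tour of length at most $L+c'N\delta$. First, by the analogue of \Cref{obs:optimal-segment-1-2} applied locally, every gadget's convex ``tip'' vertices must be visited by some route. So each gadget is visited, i.e.\ approached within $O(\delta)$, by $W_1$ or $W_2$. Next we argue that the boundary-hugging part costs at least $L_0-O(N\delta)$ regardless of the route. This holds because thin slits along $\partial R$ have convex vertices that must be visited, and those vertices are spread so that visiting them costs essentially the perimeter. It follows that the remaining length, at most $L+O(N\delta)$, must visit all $N$ gadgets. If both routes share the gadget visits, merging them through the boundary loop adds only $O(N\delta)$ of extra length, because each detour from the boundary can be charged to its own length. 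Since TSP has integer-grid inputs, the optimum is separated from any non-optimal tour by a gap of at least $1/\mathrm{poly}$. Choosing $\delta$ small makes $c'N\delta$ smaller than this gap, so the optimal segment watchman cost determines the optimal TSP length exactly.

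The main obstacle is soundness. In polygons with holes we have no \Cref{lem:conditions}, so we must prevent cheap non-intended solutions. One example is $W_2$ abandoning the boundary and sharing the tour work with $W_1$. Another is routes that guard probes from unexpected angles through gaps between holes. I would first try to force the boundary loop by lining $\partial R$ with many tiny spikes, each with a convex tip. Each tip must be visited by some route, and the spikes would be spaced so that any route visiting them all has length at least the perimeter minus $O(\delta)$. This rigidly pins down one route's structure and reduces the remaining cost to a TSP through the gadgets.
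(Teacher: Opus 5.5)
Your proposal has a genuine gap, and it is structural. Segment-guarding needs \emph{both} $w_1$ and $w_2$ to see $p$, so each of $W_1$ and $W_2$ must on its own be a watchman route of the domain (\Cref{obs:watchmen-routes}). Your design never accounts for this, and never exploits it. The asymmetric roles you assign collide with it in two places.

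First, the ``fixed'' route $W_2$ must see every probe region. If a probe is a dead-end pocket visible only from within $O(\delta)$ of $q_k$, as your forcing statement suggests, then $W_2$ must also approach every $q_k$. Its cost is then a TSP-type quantity rather than a tour-independent constant $L_0$. Your accounting (boundary part at least $L_0-O(N\delta)$, remainder at most $L+O(N\delta)$) no longer follows.

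Second, the interior tour $W_1$ must see into every spike you line $\partial R$ with. You never check this, and it fails for thin spikes, since the deep part of a thin spike is visible only from a thin cone.

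The gadget itself is also not realized. A convex or diamond-shaped hole has no convex vertices of the domain (all its vertices are reflex for $\Pol$) and creates no dead end. So the local analogue of \Cref{obs:optimal-segment-1-2} forces nothing near $q_k$; you need a non-convex hole with a hidden chamber.

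Finally, the soundness bookkeeping is unsupported. Splitting the cost into a ``boundary-hugging part'' and a ``remaining length'' is ill-defined once the two routes may share both boundary tips and gadget tips. Merging two closed routes that lie far apart costs about twice their distance, not $O(N\delta)$.

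The paper turns \Cref{obs:watchmen-routes} into the lower bound instead of fighting it. At each interior point of $\mathcal{P}$ it places a hole with an inner chamber behind a narrow door, and at each hull corner a niche with two essential cuts. Thus \emph{every} watchman route must pass within $O(\delta)$ of each point, so each of the two routes alone has length at least essentially $T_{OPT}$.

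The intended solution is symmetric. Both routes follow the optimal TSP tour through the doors and around each hole's hull, and one of them additionally visits all convex vertices of the niches and chambers. The total is $2T_{OPT}+O(|\mathcal{P}|\delta)$. Consequently each route of an optimal solution has length $T_{OPT}+O(|\mathcal{P}|\delta)$ and traces an optimal tour. No boundary pinning, charging, or merging argument is needed.

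Separately, your claim that integer-coordinate TSP has a $1/\mathrm{poly}$ gap between optimal and non-optimal tours is not known; it is a sum-of-square-roots question. Reduce instead from Hamiltonian cycle in grid graphs: there, either a tour of length $N$ exists or every tour has length at least $N-1+\sqrt{2}$. The paper's $\Delta$, the smallest difference between two pairwise distances, does not bound $T_2-T_{OPT}$ either, so the same fix is advisable there.
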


\begin{proof}
    We reduce from the \textsc{Euclidean Traveling Salesman Problem}~\cite{DBLP:conf/stoc/GareyGJ76, p-etspi-77}.
    For a given set $\mathcal{P}$ of points in the plane, we construct a polygonal domain with a niche for each point on the boundary of the convex hull of $\mathcal{P}$ and a hole for each point in the interior of $\mathcal{P}$. 
    
    We may assume that for any pair of points, the distance between the two points is different from the distance between any other pair of points. 
    Now let $\Delta = \min_{p, q, p', q' \in \mathcal{P}} d(p, q) - d(p', q')$ be the smallest difference among all pairwise distances, and set $\delta \ll \Delta / | \mathcal{P}| $. With this, we can construct our polygonal domain, as depicted in~\Cref{fig:minsum-k-gon-hardness}. 

    \begin{figure}[htb]
        \centering
        \includegraphics[scale=0.85]{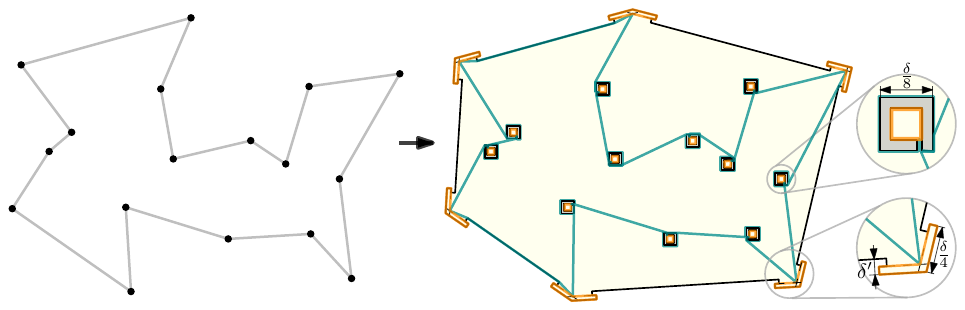}
        \caption{
        Schematic illustration of the reduction from a TSP instance to a polygonal domain. 
        The left figure shows a TSP tour on the input point set, while the right figure shows the corresponding routes in the constructed polygon. 
        Feasible watchman routes, including the detours around the convex hull of each hole (taken by both watchmen), are shown in green. For each niche and hole, $W_1$ additionally visits all convex vertices; these maximal detours are shown in orange.
        }
        \label{fig:minsum-k-gon-hardness}
    \end{figure}

    The outer polygon \Pol\ is a modification of the convex hull of $\mathcal{P}$, where each corner is replaced by a niche. Each such niche adds two essential cuts to \Pol\ that every route needs to visit, and the essential cuts lie outside the convex hull (of $\mathcal{P}$). Similarly, each polygonal hole is shaped such that it ``hides'' some part of the interior of \Pol\!, hence, adding an additional cut that every route needs to visit. 
    As every route of a segment watchman route is also a watchman route for the domain considered, every route needs to visit the essential cuts of the four niches and the ``door'' of every hole.

    Each hole $H$ is a polygon with 10 vertices. The convex hull of $H$ is a square with side length $\delta / 8$. In the inner part of the hole lies an inner square, which is connected to the outside of $H$ by a narrow tunnel, the \emph{door}. 
    See the top lens in \Cref{fig:minsum-k-gon-hardness} for an illustration. 
    To see the inner square of $H$, a watchman needs to walk through the corresponding door. 

    To place a niche at a corner of the outer polygon, eight additional vertices are necessary. For this, on each edge incident to a convex hull corner, a segment of length $\delta/4$ is cut off and parallel shifted outwards by $\delta' < \delta/8$. Then, to create the essential cut, the shifted corner is connected with the convex hull edges by three edgers of length $\delta'/2$ on each side, creating a chain of two convex vertices, the first of them is incident to the shifted corner edge, and two reflex vertices. See the bottom lens in \Cref{fig:minsum-k-gon-hardness}. 
    All edges are parallel or orthogonal to the convex hull edges. 
    The total length that is added to the outer boundary of the polygon is~$2 \delta/4 + 6 \delta'/2 < \delta$. 
    With this, each niche adds two essential cuts to the outer polygon which intersect in the interior of \Pol, but outside the convex hull. 

We obtain feasible watchman routes by adapting the TSP tour on $\mathcal{P}$ to visit the extensions in the holes and walking once around the convex hull of every hole. The latter adds a detour of $\leq \delta/2$. We denote these two routes as $W_1$ and $W_2$.
   Now, extend $W_1$ so that
    \begin{itemize}
        \item For every niche, $W_1$ visits all convex vertices of the niche, adding a detour~of~$\leq \delta$ 
        \item For every hole $H$, $W_1$ visits all vertices in the inner part of $H$, adding a detour of $\leq \delta/2$
    \end{itemize}
    
    We can prove that $W_1$ and $W_2$ are segment watchman routes in the polygonal domain. 
    First, consider a point $p$ that lies in the interior of one of the routes, say $W_1$. By construction, it is seen from $W_2$, i.e., there exist $w_2\in W_2$ that sees $p$. 
    Then the ray from $w_2$ in the direction of $p$ intersects $W_1$ at a point $w_1$ that sees $p$, and thus $p$ is segment-guarded by~$\overline{w_1w_2}$.

    Now consider a point $p$ that lies outside of each of the routes $W_1$ and $W_2$. Then $p$ lies inside a simple polygon $P'$ bounded by the two routes and the line $\ell$ through one of the edges of the outer polygon. 
    Let $\ell_p$ be the line parallel to $\ell$ that contains $p$. 
    The line $\ell$ either intersects two of the essential cuts of the outer polygon, or it intersects one of the essential cuts on one side of the edge extension and creates a pocket that contains a convex vertex on the other side of the edge extension. 
    Therefore, $\ell_p$ intersects both routes on one side of $p$ before it hits an edge of the polygonal domain, and at least one of the routes in the other direction. Hence, $p$ is segment-guarded. 

    Next, let $T_{OPT}$ be an optimal solution to the TSP in~$\mathcal{P}$, and let $T_2$ be the shortest solution to TSP in $\mathcal{P}$ that visits the points in a different order than $T_{OPT}$. 
    We argue that the length of each of the routes $W_1$ and $W_2$ is shorter than $T_2$.
    Each route $W_i$ takes a detour of at most~$\delta$ per corner niche and at most~$\delta$ per hole. With this, the following inequality holds:  
    \begin{equation*}
        |W_2| < |W_1| \leq  (T_{OPT} + |\mathcal{P}| \cdot \delta) 
        <  (T_{OPT} + \Delta) \leq  T_2.
    \end{equation*}
    Thus, contracting the loops around the holes and the detour at the niches yields a shortest route that visits all points in $\mathcal{P}$. 
    
    The routes $W_1$ and $W_2$ are feasible segment watchman routes, each of length at least~$T_{OPT}$, but strictly shorter than $T_2$. The optimal segment watchman routes also need to visit all essential cuts of the holes and niches, and have, hence, also length at least $T_{OPT}$. Moreover, they are at most as long as $W_1$ and $W_2$, hence, we need to solve the TSP on $\mathcal{P}$ to find the optimal segment watchman routes.
\end{proof}

\section{Generalization to \texorpdfstring{$\bm{k}$}{$k$}-Hull Watchman Routes}\label{sec:kgon-gen}

We now generalize our results on segment watchman to $k$-hull watchman routes. 
To this end, we prove sufficient conditions equivalent to those from~\cref{lem:conditions} for $k=2$ to all $k\ge 2$:
\begin{restatable}[The Conditions Lemma for $k$-Hull Watchman Routes]{lemma}{lemConditionsKGon}\label{lem:conditions-k}
    The routes $W_1, \dots, W_k$, $k \geq 2$, are $k$-hull watchman routes for \Pol\ if the following conditions hold:
    \begin{enumerate}[(i$_k$)]
        \item Every convex vertex is visited by one of $W_1, \dots, W_k$.\label{cond1-k}
        \item The visibility polygon of each convex vertex is visited by all of the routes $W_1, \dots, W_k$.\label{cond2-k}
        \item The routes $W_1, \dots, W_k$ are relatively convex.\label{cond3-k}
    \end{enumerate}
\end{restatable}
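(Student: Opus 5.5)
The plan is to follow the proof of \Cref{lem:conditions}, with the wedge argument generalized from two routes to $k$ routes. Condition~(\ref{cond2-k}) together with \Cref{clm:cond2} shows that each $W_i$ is a watchman route. Fix a point $p\in\Pol$. For each $i$, let $F_i$ be the set of directions $\rho$ from $p$ such that some point of $W_i$ in direction $\rho$ is visible from $p$. Since \Pol\ is simple and $W_i$ is relatively convex by Condition~(\ref{cond3-k}), $F_i$ is either all of $360^\circ$ (if $p$ lies on or inside $W_i$) or a single wedge of angle less than $180^\circ$, exactly as in \Cref{lem:conditions}.

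The key step is to show that the union $F_1\cup\dots\cup F_k$ is not contained in any open half-plane of directions at $p$. Equivalently, no wedge $F$ of angle more than $180^\circ$ avoids all $F_i$. Suppose such a wedge exists. I would take a line $\ell$ through $p$ inside $F$ that contains no edge of \Pol, and let $\overline{ab}$ be the maximal segment of $\ell$ in \Pol. None of the routes crosses $\overline{ab}$: a crossing point would be visible from $p$ along $\ell$, so its direction would lie in some $F_i$, but both directions of $\ell$ lie in $F$. Hence the subpolygon $\Pol'$ cut off by $\overline{ab}$ on the side opposite all the $F_i$ contains no point of any route. However, $\Pol'$ contains a convex vertex of \Pol, and by Condition~(\ref{cond1-k}) that vertex must be visited by some $W_i$, a contradiction. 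This part carries over essentially verbatim from \Cref{lem:conditions}.

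The remaining step, which I expect to be the main obstacle, is to turn this into actual points $w_i\in W_i$ that $p$ sees and whose convex hull contains $p$. First handle $F_i=360^\circ$ routes. If $p$ lies on some $W_i$, take $w_i=p$ and choose the other $w_j$ arbitrarily among visible points. Otherwise, for routes enclosing $p$, each visible direction supplies a point. So choose the $w_j$ with $F_j\neq360^\circ$ first; if there are none, all directions are free. Then pick directions for the enclosing routes so that together they are not contained in an open half-plane, which is always possible for those routes when $k\ge2$. This is done by placing one such route's point opposite to a chosen direction, as in \Cref{lem:conditions}.

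In the general case, all $F_i$ are wedges of angle less than $180^\circ$, and we must pick one direction $\theta_i\in F_i$ for each $i$ so that $\{\theta_1,\dots,\theta_k\}$ is not contained in an open half-plane. Then $p$ lies in the convex hull of the chosen visible points $w_i$. I would argue this as a selection lemma on the circle. Since the union of the $F_i$ meets every closed half-circle, the $k$ arcs together cover enough of the circle that choosing endpoints suffices. Concretely, pick the arcs greedily around the circle so that consecutive chosen directions are at most $180^\circ$ apart. Because each arc is shorter than $180^\circ$ and the gaps between arcs are also at most $180^\circ$, taking for each arc a suitable endpoint gives a cyclic sequence with all gaps at most $180^\circ$, meaning the chosen directions do not lie in an open half-plane. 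If an arc is not needed, any direction in it can be used. Since $p$ lies in the closed convex hull when all gaps are at most $180^\circ$, and each $w_i$ sees $p$, the point $p$ is $k$-hull-guarded.
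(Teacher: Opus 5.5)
Your proof follows the paper's proof step by step. Both use \Cref{clm:cond2} to get watchman routes. Both define wedges $F_i$ that are either full or narrower than $180^\circ$, and both use the convex-vertex contradiction from \Cref{lem:conditions} to show that no empty wedge exceeds $180^\circ$. Both treat routes enclosing $p$ separately; here your choice of a point opposite an already chosen $w_i$, which puts $p$ on a segment $\overline{w_iw_j}$, is simpler than the paper's construction via $w^u,w^\ell$. Both end by selecting one direction per wedge.

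The paper simply asserts that this final selection is possible. You try to justify it, and your justification is false as written: choosing endpoints of the arcs does not suffice. Take $k=2$, $F_1=[0^\circ,170^\circ]$ and $F_2=[100^\circ,270^\circ]$. Both arcs are shorter than $180^\circ$, and the only uncovered gap is $90^\circ$. A valid choice exists, namely $\theta_1\in[0^\circ,90^\circ]$ and $\theta_2=\theta_1+180^\circ$. However, two directions must be antipodal, and the antipodes of the endpoints are:
\begin{itemize}
\item $180^\circ$ for $0^\circ$, which is interior to $F_2$;
\item $350^\circ$ for $170^\circ$, which is not in $F_2$;
\item $280^\circ$ for $100^\circ$, which is not in $F_1$;
\item $90^\circ$ for $270^\circ$, which is interior to $F_1$.
\end{itemize}
So no pair of endpoints works. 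Your greedy step would have to be allowed to pick interior points, and as stated it is not an argument.

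A clean way to close this step is as follows.
\begin{enumerate}
\item Each $F_i$ is a closed arc, since the points of $W_i$ visible from $p\notin W_i$ form a compact set. Hence the union $U$ is closed.
\item If $U$ is not contained in an open half-plane, the origin lies in the convex hull of the unit vectors of $U$; otherwise a strictly separating line would put $U$ in an open half-plane.
\item By Carath\'eodory, at most three directions $u_1,u_2,u_3\in U$ already have the origin in their convex hull. They cannot all lie in one arc $F_i$, because such an arc lies in an open half-plane.
\item If these directions can be assigned to pairwise distinct arcs containing them, take them and pick arbitrary visible points on the remaining routes.
\item Otherwise, two of them, say $u_1,u_2$, lie in a common arc $F_i$, and $u_3\in F_j$ for some $j\neq i$. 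Since the origin is in the convex hull of $u_1,u_2,u_3$, the direction $-u_3$ lies in the cone spanned by $u_1,u_2$, that is, on the short arc between them. That short arc is contained in $F_i$.
\item The antipodal pair $-u_3\in F_i$, $u_3\in F_j$ then puts $p$ on a segment between visible points of $W_i$ and $W_j$.
\end{enumerate}
With this repair your proof is complete and coincides with the paper's.
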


\begin{proof}
	By \Cref{clm:cond2}, Condition (\ref{cond2-k}$_k$) implies that $W_i$ is a watchman route for all $i\in\{1,\dots,k\}$. 
	
	In analogy to the proof of~\cref{lem:conditions}, we now show that Conditions (\ref{cond1-k}$_k$), (\ref{cond2-k}$_k$), and~(\ref{cond3-k}$_k$) imply that $W_1, W_2, \dots, W_k$ are $k$-hull watchman routes. 
	Consider a point $p\in P$.
	Because~$W_1, W_2, \dots, W_k$ are watchman routes, on each route, we have a point $w_i\in W_i$ that sees $p$. 
	We again define maximal wedges $F_i$ bounded by two rays starting at $p$, such that for every ray $\rho\in~F_i$ there is a point $w_i\in W_i$ in that direction that sees $p$. 
	Again, as $P$ is simple, all $F_i$ are single wedges.
	
	Each wedge $F_i$ covers either $360^\circ$ (if $p$ lies on or within the relatively convex route) or less than $180^\circ$ (because by Condition (\ref{cond3-k}$_k$) all routes are relatively convex).
	If at least one wedge $F_j$, $j\in\{1,\ldots,k\}$, covers $360^\circ$ around~$p$, then $p$ is $k$-hull-guarded: 
	$F_j$ covers $360^\circ$ around~$p$, and there exist $w_i\in W_i\ \forall i\in\{1,\ldots,k\}\setminus\{j\}$ that see $p$.
	These $w_i$ form a $(k-1)$-gon $Q^{(k-1)}$, which may or may not contain $p$. 
	If $Q^{(k-1)}$ contains $p$, we can pick any point on~$F_j$ that sees $p$ to extend $Q^{(k-1)}$ to a $k$-hull $Q^{k}$ that contains $p$. 
	(This is all we need,~$Q^{k}$ (and also~$Q^{(k-1)}$) may not be contained in $P$, but that is not a requirement for $k$-hull watchman routes.) 
	If $Q^{(k-1)}$ does not contain $p$, assume, without loss of generality, that all $w_i$ for $i\in\{1,\ldots,k\}\setminus\{j\}$ are contained in the right half plane with respect to a vertical line through $p$. 
	Let $w^u, w^\ell$ be the $w_i$ with maximum and minimum $y$-coordinate, respectively. 
	We shoot rays $\rho^u$ from $w^u$ and $\rho^\ell$ from $w^\ell$ through~$p$. 
	There exist points on $W_j$ in the left half plane lying strictly between $\rho^u$ and $\rho^\ell$ that see $p$. 
	Any such point will create the desired $k$-hull with~$Q^{(k-1)}$. 
	
	Hence, assume that none of the $F_i$ covers $360^\circ$ around $p$. Then an analogous argument to that in the proof of~\Cref{lem:conditions} yields a contradiction to Condition (\ref{cond3-k}$_k$): in addition to the~$F_i, i\in\{1,\ldots,k\}$, we have between one and $k$ maximal wedge(s), $F_\bullet^j, j\in\{1,\ldots,\bar{j}\}, \bar{j}\leq k$, bounded by two rays starting at $p$, such that for every ray $\rho$ in $F_\bullet^j$ there is no $w_i\in W_i$ for any $i$ in this direction that sees $p$. 
	By the same arguments as in the proof of~\cref{lem:conditions}, none of the $F_\bullet^j$ can cover more than $180^\circ$. Hence, we can pick points $w_i\in W_i$ in the $F_i$ such that the rays $\rho_i$ from $p$ towards $w_i$ are not fully contained in any half plane defined by a line through $p$. Hence, these $w_i$ form a $k$-hull $Q^{k}$ that contains $p$.
\end{proof}

\subsection{
Approximation Algorithms for \texorpdfstring{$\bm{k}$}{$k$}-Hull Watchman Routes}\label{sec:appx-kgon}
In order to generalize the algorithms from~\cref{sec:appx-segm}, we establish the following crucial lemma.

\begin{restatable}{lemma}{lemNonCrossingRoutes}\label{lem:non-crossing-routes}
Let $X_1$ and $X_2$ be subsets of $V_C$. Then $\H(X_1)$ and $\H(X_2)$ cross each other if and only if there exist vertices $v_a,v_b,v_c,v_d$ such that $v_a,v_c\in X_1$, $v_b,v_d\in X_2$, and $a<b<c<d$.
\end{restatable}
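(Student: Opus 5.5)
We split the proof into the two directions and use the structure of $\H(X)$ given in the proof of \Cref{obs:relative_hull_is_shortest}. For $X\subseteq V_C$, $\H(X)$ is the closed route that visits the vertices of $X$ in counterclockwise order along $\partial\Pol$ and joins consecutive ones by geodesics. We also use a standard fact: a geodesic in a simple polygon bends only at reflex vertices. Hence a convex vertex $v$ lies on $\H(X)$ if and only if $v\in X$; a convex vertex not in $X$ is not even in the region $\H(X)$, since it would then be a boundary point of that region and hence a vertex of the route. Throughout, ``cross'' means a proper crossing: the routes may share points or whole geodesic subpaths without crossing.

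\emph{If.} Suppose $v_a,v_c\in X_1$, $v_b,v_d\in X_2$ and $a<b<c<d$.
\begin{enumerate}
\item The geodesic $\pi(v_a,v_c)$ runs from one point of $\partial\Pol$ to another and is contained in the geodesically convex region $\H(X_1)$.
\item It splits $\Pol$ into a part bounded by $\partial\Pol[v_a,v_c]$, which contains $v_b$, and a part bounded by $\partial\Pol[v_c,v_a]$, which contains $v_d$.
\item $\H(X_2)$ is a closed curve through $v_b$ and $v_d$. Neither point lies in the region $\H(X_1)$ if $b,d\notin X_1$; the case of shared vertices is handled separately as a degenerate case.
\item Each of the two arcs of $\H(X_2)$ from $v_b$ to $v_d$ must therefore enter the region $\H(X_1)$ and leave it on the opposite side of $\pi(v_a,v_c)$.
\item Symmetrically, $\pi(v_b,v_d)\subseteq\H(X_2)$ separates $v_a$ from $v_c$.
\item Combining the two separation statements, the cyclic order of the four points on the two closed curves is interleaved. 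By the Jordan curve theorem applied to the weakly simple polygon $\H(X_1)$, two closed curves whose marked points interleave in this way cannot merely touch, so they cross.
\end{enumerate}

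\emph{Only if.} Suppose there is no such quadruple. Then in cyclic counterclockwise order, $X_1$ and $X_2$ occupy two arcs of $\partial\Pol$ that overlap at most in their endpoints; if one set is contained in such an overlap the statement is immediate.
\begin{enumerate}
\item Let $v_s,v_e$ be the first and last vertex of $X_1$ along its arc, and let $\Pol_1$ be the subpolygon bounded by $\partial\Pol[v_s,v_e]$ and $\pi(v_s,v_e)$.
\item $\Pol_1$ is geodesically convex: any geodesic leaving it would have to cross $\pi(v_s,v_e)$ twice, which geodesics in a simple polygon cannot do.
\item Since $\Pol_1$ contains $X_1$, we get $\H(X_1)\subseteq\Pol_1$. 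Similarly $\H(X_2)\subseteq\Pol_2$, defined from the arc of $X_2$.
\item The geodesics $\pi(v_s,v_e)$ and $\pi(v_{s'},v_{e'})$ have non-interleaved endpoints. Two geodesics in a simple polygon meet in a connected set, so these two do not cross.
\item Hence $\Pol_1$ and $\Pol_2$ have disjoint interiors, and $\H(X_1)$ and $\H(X_2)$ can only touch along the common boundary, not cross.
\end{enumerate}

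The main obstacle is the degenerate situations. These include thin or degenerate hulls (single geodesics traversed back and forth), shared vertices $X_1\cap X_2$, and shared geodesic pieces. Here ``touch'' versus ``cross'' has to be made precise. I would first fix the definition of crossing as ``cannot be made disjoint by an arbitrarily small perturbation that keeps each route inside its own side''. Then I would check each step of the separation arguments above against that definition, using the fact that shared portions of two geodesics are connected subpaths.
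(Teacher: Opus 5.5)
Your proof is correct in substance, for disjoint $X_1,X_2$ and with $a<b<c<d$ read cyclically, which is how the paper uses the lemma.

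\textbf{The ``if'' direction.} This is the paper's argument: a geodesic $\pi(v_a,v_c)\subseteq\H(X_1)$ separates $v_b$ from $v_d$. The paper, however, takes $v_a$ and $v_c$ to be the vertices of $X_1$ immediately before and after $v_b$. Then $\pi(v_a,v_c)$ is an edge of the route $\H(X_1)$ itself, and any closed route through $v_b$ and $v_d$ must cross it. This makes your steps 4--6 unnecessary. That matters, because the interleaving/Jordan-curve claim in step 6 is not a standard fact. It is also false once some of the four marked points lie on both routes.

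\textbf{The ``only if'' direction.} Here you take a genuinely different route. You argue by contraposition: you enclose each hull in the geodesically convex piece cut off by the geodesic between its extreme vertices, and show the two pieces have disjoint interiors. The paper argues directly from a crossing point $x$. Walking along $\H(X_1)$ from $x$ gives consecutive $v_a,v_c\in X_1$ with $x\in\pi(v_a,v_c)$. The paper then asserts that both sides of $\pi(v_a,v_c)$ contain a vertex of $X_2$. That is shorter, but the assertion silently uses your fact. If one side had no vertex of $X_2$, geodesic convexity of the other side would confine $\H(X_2)$ there, so it could only touch $\pi(v_a,v_c)$. Your version makes this explicit. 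It also gives a slightly stronger structural conclusion: non-interleaving sets have hulls lying in subpolygons with disjoint interiors.

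\textbf{Shared vertices.} Do not plan to ``handle'' shared vertices separately, because there the statement itself fails. Take a square $v_0v_1v_2v_3$ with $X_1=\{v_0,v_2\}$ and $X_2=\{v_0,v_1,v_2,v_3\}$. The quadruple exists, yet the diagonal $\H(X_1)$ only touches $\H(X_2)=\partial\Pol$ at $v_0$ and $v_2$. The correct fix is to assume $X_1\cap X_2=\emptyset$, which is the only case the paper needs (partitions of $V_C$).
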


\begin{proof}
	Assume that there exists vertices $v_a,v_b,v_c,v_d$ such that $v_a,v_c\in X_1$, $v_b,v_d\in X_2$, and $a<b<c<d$. Let $v_a\in X_1$ be such that $a<b$ is the largest possible index, and let $v_c\in X_1$ be such that $c>b$ is the smallest possible index. Then $\H(X_1)$ contains the geodesic path~$\pi(v_a,v_c)$. This path splits \Pol\ into two polygons, one containing $v_b$ and the other containing $v_d$. Since $v_b,v_d\in X_2$, $\H(X_2)$ must cross $\pi(v_a,v_c)$.
	
	For the other direction, assume that $\H(X_1)$ and $\H(X_2)$ cross each other at a point $x$. Let $v_a\in X_1$ be the first vertex encountered when walking on $\H(X_1)$ from $x$ in one direction, and $v_c\in X_1$ the first vertex encountered when walking on $\H(X_1)$ from $x$ in the opposite direction. Then $x$ is on $\pi(v_a,v_c)$, which again splits \Pol\ into two polygons. Because $\H(X_2)$ is crossing $\H(X_1)$, each of these polygons contains at least one convex vertex in $X_2$, and therefore there exists $v_b,v_d\in X_2$ such that $a<b<c<d$.
\end{proof}

\medskip
Consider a subset $X\subseteq V_C$ of the convex vertices of \Pol, and by~\cref{obs:relative_hull_is_shortest}, let $C_X=\H(X)$ be a shortest route that visits the vertices in $X$.
Let $v_X$ be the vertex in $X$ with the smallest index, and denote by $D_X$ the shortest route that starts and ends at $v_X$, and that sees all the convex vertices in $V_C\setminus X$. 
Recall that $C_{\Pol}=\partial\Pol$ is a watchman route, $D_{\Pol}$ is the shortest floating watchman route in {\Pol}, and that $|D_{\Pol}|\le |C_{\Pol}|$.
We define $W_X\defeq \H(C_X \cup D_X)$ as the relative convex hull of the routes $C_X$ and $D_X$.

We construct our approximate solutions by iterating over all possible ways to partition the set $V_C$ of convex vertices into subsets $X_1,\dots,X_m$ for $m\le k$ such that the routes~$C_{X_1},\dots,C_{X_m}$ are disjoint. 
By \Cref{lem:non-crossing-routes}, two routes $C_{X_i}$ and $C_{X_j}$ are disjoint if and only if $X_i$ and $X_j$ are ``non-crossing'' with respect to the order along $\partial\Pol$.
We say that two sets $X,Y\subseteq V_C$ are \emph{non-crossing} if there exist two indices $0\le a,b\le t-1$ such that (i) the indices of vertices in $X$ are in $[a,b-1]$ and (ii) the indices of vertices in $Y$ are in $[b,a-1]$.
We say that~$k$ subsets $X_1,\dots,X_k\subseteq V_C$ are a \emph{non-crossing partition} of $V_C$ if for a convex~$|V_C|$-gon the relative convex hulls of the vertices corresponding to the subsets $X_i$ do not intersect.
It~is well-known that the number of non-crossing partitions of size~$k$ for a set of $t$ points is equal to the Narayana number $N(k,t)= \frac{1}{t} \binom{t}{k} \binom{t}{k-1}$~\cite{narayana-numbers-oeis} with $\frac{1}{t} \binom{t}{k} \binom{t}{k-1}=\mathcal{O}(t^{2k-2})$.

\subparagraph*{Min-Max Variant.}

Let $\mathcal{S}^m$ be the set of all possible non-crossing partitions of size $m$ for the set $V_C$. For the approximated solution, we iterate over all the non-crossing partitions and pick the best option, as follows.
For every length $2\le m\le k$, set:
\begin{equation}\label{set-routes-min-max}
(W_1,W_2,\dots,W_m)
=
\argmin_{\{X_1,\dots,X_m\}\in\mathcal{S}^m}\max\{\vert W_{X_1}\vert,\vert W_{X_2}\vert,\dots,\vert W_{X_m}\vert\},
\end{equation}
and set $W_{m+1}=\dots=W_k=D_{\Pol}$.
Then, we take the optimal set over all choices of $m$. 
Finally, if $\max_i |W_i|>|C_{\Pol}|$, we pick $W_1=C_{\Pol}$ and $W_2=W_3=\dots=W_k=D_{\Pol}$. 

By \Cref{lem:conditions-k,obs:RCH-is-valid}, $(W_1,W_2,\dots,W_k)$ is a feasible solution for the $k$-hull watchman routes problem.
If $\OPT_k(\Pol)$ denotes the size of an optimal solution for the min-max $k$-hull watchman routes problem in~\Pol, we obtain the following lemma:
\begin{restatable}{lemma}{lemMaxRouteSmallerKOPT}\label{lem:twoapprox-k}
$\max\big\{\vert W_{1}\vert,\vert W_{2}\vert,\dots,\vert W_{k}\vert\big\} \leq k \cdot \OPT_k(\Pol)$.
\end{restatable}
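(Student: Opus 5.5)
We generalize the proof of \Cref{lem:twoapprox}. Let $W_1^*,\dots,W_k^*$ be optimal min-max $k$-hull watchman routes with $\max_i|W_i^*|=\OPT_k(\Pol)$. By the $k$-hull analogue of \Cref{obs:optimal-segment-1-2,obs:optimal-segment}, we may assume each $W_i^*$ is relatively convex, every convex vertex is visited by some $W_i^*$, and each $W_i^*$ is a watchman route (\Cref{obs:watchmen-routes}). Condition (i$_k$) should follow as in the segment case: a convex vertex $v$ can only lie in the convex hull of points seeing it if one of them is $v$ itself, since all these points lie in the wedge of $v$ of angle less than $180^\circ$. Let $X_i^*\subseteq V_C$ be the convex vertices visited by $W_i^*$, disjointified by assigning each vertex to a single route.

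Next we group the routes by crossing. Build the intersection graph on $\{W_1^*,\dots,W_k^*\}$, joining two routes if they intersect. Let its connected components be $K_1,\dots,K_m$, and for each component let $Y_j=\bigcup_{W_i^*\in K_j}X_i^*$. Routes in different components are disjoint relatively convex closed curves, so their relative convex hulls do not cross. By \Cref{lem:non-crossing-routes}, $\H(Y_j)$ is contained in (or equals) the region spanned by the union of the component. Hence $\{Y_1,\dots,Y_m\}$ is a non-crossing partition of $V_C$. We must show that no two components interleave; the plan is to argue this via \Cref{lem:non-crossing-routes} applied to disjoint geodesically convex regions.

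If $m=1$, all routes form one connected union visiting all convex vertices, so $|C_\Pol|\le\sum_i|W_i^*|\le k\cdot\OPT_k(\Pol)$. Also $|D_\Pol|\le\OPT_k(\Pol)$, so the fallback choice $W_1=C_\Pol$, $W_2=\dots=W_k=D_\Pol$ gives the bound. If $m\ge2$, the algorithm considers the partition $\{Y_1,\dots,Y_m\}\in\mathcal{S}^m$. For each $j$, the union $U_j$ of the routes in $K_j$ is connected and visits $Y_j$. Hence $|C_{Y_j}|\le|U_j|\le|K_j|\cdot\OPT_k(\Pol)$ by \Cref{obs:relative_hull_is_shortest}, since a closed walk traversing a connected union of closed curves has length equal to the sum of their lengths. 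Moreover, any single route of $K_j$ visits $v_{Y_j}$ or can be extended to it, and it sees all of $V_C\setminus Y_j$. So we would like $|D_{Y_j}|\le\OPT_k(\Pol)$, and then $|W_{Y_j}|\le|C_{Y_j}|+|D_{Y_j}|\le(|K_j|+1)\OPT_k(\Pol)$.

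This is the main obstacle: $|K_j|+1$ may exceed $k$ when $|K_j|=k$, but that case is $m=1$, which is already handled. Otherwise $|K_j|\le k-1$, giving $|W_{Y_j}|\le k\cdot\OPT_k(\Pol)$. A second issue is that $D_{Y_j}$ must start at $v_{Y_j}$, which a route of $K_j$ need not visit. To handle this, we take the route $W_i^*$ in $K_j$ that visits $v_{Y_j}$. It is a watchman route through $v_{Y_j}$, so $|D_{Y_j}|\le|W_i^*|\le\OPT_k(\Pol)$. The remaining routes $W_{m+1},\dots,W_k$ are $D_\Pol$, whose length is at most $\OPT_k(\Pol)$. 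Taking the argmin over partitions and over $m$ gives the bound.
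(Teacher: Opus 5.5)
Your overall structure matches the paper's proof. You take the components of the intersection graph and use the fallback $C_\Pol,D_\Pol,\dots,D_\Pol$ when there is a single component. Otherwise you bound $|C_{Y_j}|\le(k-1)\cdot\OPT_k(\Pol)$ and $|D_{Y_j}|\le\OPT_k(\Pol)$. Taking the optimal route that actually passes through $v_{Y_j}$ is a cleaner justification of the second bound than the paper gives.

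The one step you leave unfinished is that $Y_1,\dots,Y_m$ do not interleave, and the route you sketch for it does not go through as stated. The containment $\H(Y_j)\subseteq\H(U_j)$ is trivial; it is not what \Cref{lem:non-crossing-routes} says. It only helps if the regions $\H(U_j)$ of different components do not overlap, and disjointness of the routes does not give that. For instance, a segment-like route starting at $v_b$ can reach into the relative convex hull of two crossing routes of another component without touching either of them. So ``disjoint geodesically convex regions'' are not available in general. The paper's remark that merging yields ``non-intersecting routes'' glosses over the same point.

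The claim itself follows from a direct topological argument. Suppose $a<b<c<d$ with $v_a,v_c\in Y_1$ and $v_b,v_d\in Y_2$. The path-connected set $U_1$ contains a path from $v_a$ to $v_c$, and $U_2$ contains a path from $v_b$ to $v_d$. Since $\Pol$ is a topological disk and the four points alternate along $\partial\Pol$, these paths must meet, contradicting $U_1\cap U_2=\emptyset$. With this argument the relative-convexity preprocessing is unnecessary; the necessity of Condition (i$_k$), which you argue correctly, is all you need to make the $Y_j$ cover $V_C$.

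Finally, discard components that visit no convex vertex, since for them $v_{Y_j}$ is undefined. If only one nonempty class remains, it comes from at most $k-1$ routes, so $|C_\Pol|\le(k-1)\cdot\OPT_k(\Pol)$ and the fallback applies.
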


\begin{proof}
	Throughout this proof, when referring to a route $W$ considered by our algorithm, we assume that it has both a corresponding $C$-part and $D$-part, however, either may be empty.
	
	Let $W_1^*,W_2^*,\dots, W_k^*$ be an optimal solution for the $k$-hull watchmen routes problem, i.e., ${\max\big\{\vert W_1^*\vert,\vert W_2^*\vert,\dots,\vert W_k^*\vert\big\}=\OPT_k(\Pol)}$. 
	Without loss of generality, we may assume that $W_1^*,W_2^*,\dots, W_k^*$ are as short as possible.
	
	For a route $W_j^*$, denote by $V_C(W_j^*)$ the set of convex vertices of \Pol\ that it visits.
	Notice that if there exists a route $W_j^*$ with $V_C(W_j^*)=V_C$, then $|W_j^*|\ge |C_{\Pol}|$, and it is enough to pick all the other routes to be $D_{\Pol}$. Therefore in this case $W_1=C_{\Pol}$ and $W_2=W_3=\dots=W_k=D_{\Pol}$ is an optimal solution to the problem, and the theorem holds.
	
	Consider the intersection graph $G$ of $W_1^*,W_2^*,\dots, W_k^*$: the vertices of $G$ are the routes $W_1^*,W_2^*,\dots, W_k^*$, and there is an edge between two vertices if the corresponding routes intersect. 
	Consider a maximal connected component of $G$, and assume w.l.o.g. that its corresponding routes are $W_1^*,W_2^*,\dots, W_r^*$, for some $r\le k$.
	Consider the ``merged'' route $W'=H(W_1^*\cup \dots\cup W_r^*)$, then $|W'|\le r\cdot\max \{|W_1^*|,\dots,|W_r^*| \} \le r\cdot\OPT_k(\Pol)$.
	If there is a single connected component in $G$, then $W'$ visits all the convex vertices of \Pol, and therefore $|C_{\Pol}|\le |W'|\le k\cdot\OPT_k(\Pol)$. Because $|D_{\Pol}|\le |W'|$, we get that $W_1=C_{\Pol}$, and $W_2=W_3=\dots=W_k=D_{\Pol}$ is a $k$-approximation, as required.
	
	Else, assume that there is more than one connected component. By ``merging'' each connected component of $G$ into a single route in the same way, we obtain a set of non-intersecting routes that together visit all the convex vertices of \Pol\ (and each of them sees all the convex vertices).
	Let $W'_1,\dots, W'_m$ be the merged routes, then the routes $(W'_1,\dots, W'_m,W'_{m+1},\dots,W'_k)$, where $W'_i=D_{\Pol}$ for $m+1\le i\le k$, make a feasible solution. Moreover, because the size $r$ of each component is strictly smaller than $k$, we have $\max_{i=1\dots k} |W'_i|\le (k-1)\cdot\OPT_k(\Pol)$. Now, the routes $(W'_1,\dots,W'_m)$ are pairwise non-crossing.
	Consider the sets $V_C(W'_1),\dots,V_C(W'_m)$. By the minimality of $W^*_i$, the route $H(V_C(W'_i))$ is contained in $W'_i$ for every $1\le i\le m$, and thus the routes $H(V_C(W'_1)),\dots,H(V_C(W'_m))$ are also non-crossing. By \Cref{lem:non-crossing-routes}, the sets $V_C(W'_1),\dots,V_C(W'_m)$ form a non-crossing partition of $V_C$. 
	Denote $X_i=V_C(W'_i)$ for $1\le i\le m$, then
	$|C_{X_i}|\le |W'_i|\le (k-1)\cdot\OPT_k(\Pol)$ because $C_{X_i}$ is the shortest route that visits $X_i$.
	Because $D_{X_i}$ is the shortest route that sees only the convex vertices in $X_i$ and visits only a single convex vertex, it must be shorter than the route in the optimal solution that visits that vertex. Therefore, we have $|D_{X_i}|\le \OPT_k(\Pol)$ for any $X_i$, and we obtain that \[\max_{i=1\dots k}|W_{X_i}|\le \max_{i=1\dots k}\{|C_{X_i}|+|D_{X_i}|\}\le \max_{i=1\dots k}\{|C_{X_i}|\}+\OPT_k(\Pol)\le  k\cdot\OPT_k(\Pol).\]
	Because $X_1,\dots,X_m$ is a non-crossing partition, it was considered when choosing $W_1,\dots W_k$. Thus $\max_{i=1\dots k}|W_i|\le \max_{i=1\dots k}|W_{X_i}|$, and the claim follows.
\end{proof}

\medskip
We thus derive the following theorem. 
Its proof mirrors that of \Cref{thm:twoapprox}: feasibility follows from \Cref{lem:conditions-k}, and the approximation guarantee from \Cref{lem:twoapprox-k}. 
Moreover, the number of partitions is $\mathcal{O}(t^{2k-2})$.
\begin{theorem}
    Given a simple polygon \Pol\ with $n$ vertices, out of which $t$ are convex, one can compute a $k$-approximation for the $k$-hull watchman route problem under the min-max criterion in $\mathcal{O}(n^4+t^{2k-2}\cdot n^3)$~time.
\end{theorem}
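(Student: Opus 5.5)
The plan is to split the statement into three parts, mirroring the proof of \Cref{thm:twoapprox}: feasibility of the output, the approximation ratio, and the running time.

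\textbf{Feasibility.} Each $W_{X_i}=\H(C_{X_i}\cup D_{X_i})$ is relatively convex. By \Cref{obs:RCH-is-valid}, it visits every vertex of $X_i$ and sees every vertex of $V_C\setminus X_i$. Since $C_{X_i}$ visits $X_i$, the route also sees every vertex of $X_i$. Hence each $W_{X_i}$ visits the visibility polygon of every convex vertex. The extra routes $D_{\Pol}$ are watchman routes; if needed, they are replaced by their relative convex hulls, which by the argument in \Cref{obs:optimal-segment} are no longer and remain watchman routes. The same holds for $C_{\Pol}=\partial\Pol$, which is relatively convex. Because $X_1,\dots,X_m$ partition $V_C$, condition (i$_k$) holds. \Cref{lem:conditions-k} then gives that the output consists of $k$-hull watchman routes, and the fallback pair $(C_{\Pol},D_{\Pol},\dots)$ is feasible for the same reason.

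\textbf{Approximation ratio.} This is exactly \Cref{lem:twoapprox-k}, and since the algorithm takes the minimum over all candidate solutions, the output inherits the bound $k\cdot\OPT_k(\Pol)$.

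\textbf{Running time.} This is the main bookkeeping step. First compute $D_{\Pol}$ once in $\mathcal{O}(n^4)$ time~\cite{Tan01a,TanJ17}. Next, enumerate all non-crossing partitions of $V_C$ into $m\le k$ parts. Summing the Narayana numbers gives $\sum_{m\le k}N(m,t)=\mathcal{O}(t^{2k-2})$ partitions, and they can be listed with polynomial overhead by choosing the cyclic interval structure.

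For each part $X$ we need $C_X$ and $D_X$:
\begin{itemize}
\item $C_X=\H(X)$ is obtained in $\mathcal{O}(n)$ time by concatenating geodesics between consecutive vertices of $X$, since for a simple polygon these follow boundary chains and shortest paths.
\item $D_X$, an anchored watchman route for the convex vertices outside $X$, is obtained in $\mathcal{O}(n^3)$ time by the same modification of Jiang and Tan used in \Cref{thm:twoapprox}.
\end{itemize}
Taking $\H(C_X\cup D_X)$ is linear in the route complexity.

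The obstacle is that a partition has up to $k$ parts, so a naive count is $k\cdot t^{2k-2}\cdot n^3$. I would absorb the factor $k$ as a constant. Alternatively, I would precompute $D_X$ only for distinct subsets appearing as parts. This does not reduce the count, since parts are unions of cyclic intervals. Instead I would note that $D_X$ depends only on which vertices must be seen and on the anchor $v_X$, and accept constant-$k$ factors.

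Summing the three contributions gives $\mathcal{O}(n^4+t^{2k-2}n^3)$.
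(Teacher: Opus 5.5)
Your proposal is correct and follows the paper's own argument. Feasibility comes from \Cref{lem:conditions-k} together with \Cref{obs:RCH-is-valid}, the ratio from \Cref{lem:twoapprox-k}, and the running time from one $\mathcal{O}(n^4)$ computation of $D_{\Pol}$ plus $\mathcal{O}(n^3)$ work per part over the $\mathcal{O}(t^{2k-2})$ non-crossing partitions, with factors depending only on $k$ absorbed into the constant; your explicit check that the padding routes $D_{\Pol}$ can be taken relatively convex, so that Condition (iii$_k$) applies, is a detail the paper leaves implicit.
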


\subparagraph*{Min-Sum Variant.}
We now generalize our approximation algorithm for the min-sum objective. 
Here, instead of setting the routes as in \Cref{set-routes-min-max}, we set
\begin{equation}
(W_{1},W_{2},\dots,W_m)
=\!\!\!\!\!\!\!
\argmin_{\{X_1,\dots,X_m\}\in\mathcal{S}^m}\sum_{i=1}^m \big\vert W_{X_i}\big\vert.
\end{equation}

We also define $D_{X_i}$ slightly differently as the shortest floating watchman route that sees all convex vertices in $V_C\setminus X_i$, without the need to visit any convex vertex.
Let $\OPT^\Sigma_k(\Pol)$~be the size of an optimal solution for the \textsc{Min-Sum $k$-hull-Watchman Routes Problem}~in~\Pol. 
\begin{restatable}{lemma}{lemTwoApproxKRoutes}\label{lem:twoapprox-min-sum-k}
$\sum_{i=1}^k\vert W_{i}\vert \leq 2 \cdot {\OPT^\Sigma_k}(\Pol)$.
\end{restatable}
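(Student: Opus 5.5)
We follow the proof of \Cref{lem:twoapprox-k}, but charge lengths additively instead of by the maximum. Let $W_1^*,\dots,W_k^*$ be an optimal min-sum solution, with each route as short as possible, so that $\OPT^\Sigma_k(\Pol)=\sum_i|W_i^*|$. By \Cref{obs:watchmen-routes}, every $W_i^*$ is a watchman route. Hence $|D_{\Pol}|\le\min_i|W_i^*|$, and the same bound holds for every floating route $D_{X}$, since each $W_i^*$ in particular sees all convex vertices in $V_C\setminus X$. Conditions (\ref{cond1-k}$_k$) and (\ref{cond2-k}$_k$) are necessary, by the argument of \Cref{obs:optimal-segment-1-2} extended to hulls: a convex vertex can only lie in the hull of points seeing it if it is one of them. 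So every convex vertex is visited by some $W_i^*$.

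We build the intersection graph $G$ of $W_1^*,\dots,W_k^*$ and merge each connected component into $W'_j=\H(\bigcup_{i\in \text{comp}_j}W_i^*)$. This gives $|W'_j|\le\sum_{i\in\text{comp}_j}|W_i^*|$ and therefore $\sum_{j=1}^m|W'_j|\le\OPT^\Sigma_k(\Pol)$. If $m=1$, the merged route visits all convex vertices, so $|C_{\Pol}|\le\OPT^\Sigma_k(\Pol)$. We then compare with the solution $C_{\Pol}$ together with $k-1$ copies of $D_{\Pol}$. This is the step I expect to be the main obstacle: the $k-1$ copies of $D_{\Pol}$ cost up to $(k-1)\min_i|W_i^*|$, which is at most $\OPT^\Sigma_k(\Pol)$ only if we use $\min\le$ average. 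Indeed $(k-1)\min_i|W_i^*|\le\frac{k-1}{k}\sum_i|W_i^*|$, so this solution costs at most $(2-\frac1k)\OPT^\Sigma_k(\Pol)\le 2\,\OPT^\Sigma_k(\Pol)$, and the case is fine.

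If $m\ge 2$, set $X_j=V_C(W'_j)$. As in \Cref{lem:twoapprox-k}, minimality gives $\H(X_j)\subseteq W'_j$, and \Cref{lem:non-crossing-routes} shows that $\{X_1,\dots,X_m\}$ is a non-crossing partition in $\mathcal{S}^m$. Then $|W_{X_j}|\le|C_{X_j}|+|D_{X_j}|\le|W'_j|+|D_{X_j}|$. The padding routes $W_{m+1}=\dots=W_k=D_{\Pol}$ each cost at most $\min_i|W_i^*|$. Summing gives
\[
\sum_{i=1}^k|W_i|\le\sum_{j=1}^m|W'_j|+\sum_{j=1}^m|D_{X_j}|+(k-m)|D_{\Pol}|\le\OPT^\Sigma_k(\Pol)+k\cdot\min_i|W_i^*|\le 2\,\OPT^\Sigma_k(\Pol).
\]
The floating definition of $D_{X_j}$ is what makes this work: a component's $D_{X_j}$ can be charged to the cheapest optimal route, with no need to start at a visited vertex.

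Finally, the algorithm takes the minimum over all $m$ and all partitions in $\mathcal{S}^m$, together with the fallback $(C_{\Pol},D_{\Pol},\dots,D_{\Pol})$. Its output therefore costs no more than the solution exhibited in whichever case applies, which proves the claim. Feasibility is already provided by \Cref{lem:conditions-k} and \Cref{obs:RCH-is-valid}. The floating $D_{X_i}$ still sees $V_C\setminus X_i$, so $W_{X_i}$ is relatively convex and satisfies the conditions.
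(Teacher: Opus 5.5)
Your proof follows the paper's argument almost line by line: you merge the components of the intersection graph, use the fallback $(C_{\Pol},D_{\Pol},\dots,D_{\Pol})$ when there is one component, and otherwise evaluate the non-crossing partition $X_j=V_C(W'_j)$. It also inherits the one step of that argument that does not hold as written, namely $|W_{X_j}|=|\H(C_{X_j}\cup D_{X_j})|\le|C_{X_j}|+|D_{X_j}|$.

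This inequality is valid only when $C_{X_j}$ and $D_{X_j}$ intersect. In that case their union is traced by a single closed walk of length $|C_{X_j}|+|D_{X_j}|$, and taking the relative convex hull does not increase length. That is why the inequality works in \Cref{lem:twoapprox-min-sum} and in the min-max variant, where $D$ starts and ends at a vertex of $C$. With the floating $D_{X_j}$, nothing in the definition forces $D_{X_j}$ to touch $\H(X_j)$. If the two are disjoint, the hull can be longer by up to about twice the distance between them, and that distance is not controlled by $\OPT^\Sigma_k(\Pol)$.

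For example, take an $L\times w$ rectangle with $w\ll L$ and $k=2$. The left and right sides, each traversed back and forth, are segment watchman routes of total length $4w$. For $X_1$ consisting of the two left corners, every single point is a shortest route seeing the remaining corners. If $D_{X_1}$ is such a point near the right side, then $|W_{X_1}|\approx 2L$, while $|C_{X_1}|+|D_{X_1}|=2w$. So your closing remark that the floating definition ``is what makes this work'' is backwards: it makes the bound on $|D_{X_j}|$ easy, but it destroys the bound on the hull.

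The repair is to keep $D_{X_j}$ anchored at $v_{X_j}$, as in the min-max variant, and charge it differently.
Since $C_{X_j}$ and $D_{X_j}$ then share $v_{X_j}$, the inequality $|W_{X_j}|\le|C_{X_j}|+|D_{X_j}|$ holds.
The vertex $v_{X_j}$ lies on some optimal route $W^*_{i(j)}$ of the $j$-th component, because a convex vertex on the boundary of a relative convex hull must lie on one of the hulled routes (geodesics cannot pass through convex vertices). Hence $|D_{X_j}|\le|W^*_{i(j)}|$.
The indices $i(1),\dots,i(m)$ are distinct because the components are disjoint, and each of the remaining $k-m$ optimal routes has length at least $|D_{\Pol}|$. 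Therefore $\sum_{j=1}^m|D_{X_j}|+(k-m)|D_{\Pol}|\le\OPT^\Sigma_k(\Pol)$.
Together with $\sum_j|C_{X_j}|\le\sum_j|W'_j|\le\OPT^\Sigma_k(\Pol)$, this gives the factor $2$.
This is exactly the charging used for $k=2$ in \Cref{lem:twoapprox-min-sum}, where $|D_{ij}|\le|W_1^*|$ and $|D_{ji}|\le|W_2^*|$ are charged to different optimal routes rather than each to $\min_i|W_i^*|$.
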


\begin{proof}
	Let $W_1^*,W_2^*,\dots, W_k^*$ be an optimal solution, i.e., $\sum_{i=1}^k\vert W_i^*\vert=\OPT^\Sigma_k(\Pol)$.
	First notice that because $D_\Pol$ is the shortest floating watchman route, we have $|D_{\Pol}|\le W^*_i$ for every $i\in\{1,\dots,k\}$, and thus $|D_{\Pol}|\le {\OPT^\Sigma_k}(\Pol)/k$.
	
	Consider a maximal connected component of the graph $G$ as in the proof of \Cref{lem:twoapprox-k}, and assume, without loss of generality, that its corresponding routes are $W_1^*,W_2^*,\dots, W_r^*$, for some $r\le k$.
	Consider the ``merged'' route $W'=H(W_1^*\cup \dots\cup W_r^*)$, then $|W'|\le \sum_{i=1}^r|W_i^*|$.
	If there is a single connected component in $G$, then $W'$ visits all the convex vertices of \Pol, and therefore $|C_{\Pol}|\le |W'|$. 
	We thus have
	\[
	\sum_{i=1}^k |W_i|\le |C_\Pol|+(k-1)\cdot|D_\Pol|\le \OPT^\Sigma_k(\Pol)+(k-1)\cdot {\OPT^\Sigma_k}(\Pol)/k< 2\cdot{\OPT^\Sigma_k}(\Pol).
	\]
	
	Else, assume that there is more then one connected component. Let $W'_1,\dots, W'_m$ be the merged routes that correspond to connected components of $G$, as in the proof of \Cref{lem:twoapprox-k}.  
	Each such route $W'_i$ is the relative convex hull of a set of optimal routes from a different connected component of $G$.
	Because each route from $W_1^*,W_2^*,\dots, W_k^*$ participates in exactly one of the merged routes $W'_1,\dots, W'_m$, we have $\sum_{i=1}^m |W'_i|\le \sum_{i=1}^k |W^*_i|$. 
	
	Let $X_i=V_C(W'_i)$ for $i\in\{1,\dots m\}$ be the sets of convex vertices that are visited by $W'_i$, and notice that $X_1,\dots,X_m$ form a non-crossing partition of $V_C$. 
	Set $X_{m+1}=\dots=X_k=\emptyset$, so by definition $C_{X_{m+1}}=\dots=C_{X_k}=\emptyset$ and $D_{X_{m+1}}=\dots=D_{X_k}=D_\Pol$.
	Consider the routes~$W_{X_i}=\H(C_{X_i}\cup D_{X_i})$ for $i\in\{1,\dots,k\}$.
	Because $C_{X_i}$ is the shortest route that visits~$X_i$, we have $|C_{X_i}|\le|W'_i|$ for $i\in\{1,\dots,m\}$, and thus $\sum_{i=1}^m |C_{X_i}|\le\sum_{i=1}^m |W'_i| \le\sum_{i=1}^k |W^*_i|$. 
	We~therefore have    
	\begin{align*}
		\sum_{i=1}^k|W_{X_i}|&\le \sum_{i=1}^k(|C_{X_i}|+|D_{X_i}|)=\sum_{i=1}^m|C_{X_i}|+\sum_{i=m+1}^k|C_{X_i}|+\sum_{i=1}^k|D_{X_i}|\\
		&\le \sum_{i=1}^m|C_{X_i}|+\sum_{i=1}^k|D_\Pol|\le
		\sum_{i=1}^k|W^*_i|+k\cdot|D_\Pol|\le 2\cdot\OPT^\Sigma_k(\Pol).
	\end{align*}
	
	Because $X_1,\dots,X_m$ is a non-crossing partition, it was considered when choosing $W_1,\dots W_k$. Thus $\sum_{i=1}^k|W_i|\le \sum_{i=1}^k|W_{X_i}|$, and the claim follows.
\end{proof}

We obtain the following theorem. 
Again, our proof is similar to our proof of \Cref{thm:twoapprox}: using \Cref{lem:conditions-k} for feasibility and \Cref{lem:twoapprox-min-sum-k} for the approximation factor, and where the number of partitions is $\mathcal{O}(t^{2k-2})$. 
However, the running time increases slightly, as $D_{X}$ is now defined to be a floating watchmen route, whose computation requires $\mathcal{O}(n^4)$ time.
\begin{theorem}
    Given a simple polygon \Pol\ with $n$ vertices, out of which $t$ are convex, one can compute a $2$-approximation for the $k$-hull watchmen route problem under the min-sum criterion in $\mathcal{O}(t^{2k-2}\cdot n^4)$ time.
\end{theorem}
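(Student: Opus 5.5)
The plan is to follow the proof of \Cref{thm:twoapprox}, splitting the statement into feasibility, approximation factor, and running time.

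\emph{Feasibility.} Every candidate output consists of routes of the form $W_{X_i}=\H(C_{X_i}\cup D_{X_i})$, padded with copies of $D_{\Pol}$, or of the fallback pair $C_{\Pol}$, $D_{\Pol}$. Since $X_1,\dots,X_m$ is a partition of $V_C$ and $C_{X_i}$ visits $X_i$, Condition~(\ref{cond1-k}$_k$) holds. By \Cref{obs:RCH-is-valid}, each $W_{X_i}$ visits $X_i$ and sees $V_C\setminus X_i$, so it sees every convex vertex; the padding route $D_{\Pol}$ does too. By \Cref{clm:cond2}, this gives Condition~(\ref{cond2-k}$_k$). Condition~(\ref{cond3-k}$_k$) needs some care:
\begin{itemize}
\item each $W_{X_i}$ is a relative convex hull, so it is relatively convex;
\item $C_{\Pol}=\partial\Pol$ is relatively convex;
\item for the padding routes I would replace $D_{\Pol}$ by $\H(D_{\Pol})$. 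This is no longer, and it still sees every convex vertex by the crossing argument in the proof of \Cref{obs:optimal-segment}.
\end{itemize}
With all three conditions in place, \Cref{lem:conditions-k} yields feasibility.

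\emph{Approximation factor.} This is exactly \Cref{lem:twoapprox-min-sum-k}. The algorithm takes the minimum over all $m\le k$ and all non-crossing partitions, and the partition built in that proof is among the candidates. One point to verify is whether the bound needs the sets to be nonempty; it does not, because empty sets give $D_{\Pol}$, which matches how the algorithm pads.

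\emph{Running time.} The steps are:
\begin{itemize}
\item Compute $D_{\Pol}$ once in $\mathcal{O}(n^4)$ time~\cite{Tan01a,TanJ17}.
\item For each candidate partition $\{X_1,\dots,X_m\}\in\mathcal{S}^m$, with $m\le k$, compute $m\le k$ routes $W_{X_i}$.
\item Compute each $C_{X_i}=\H(X_i)$ in linear time, as in \Cref{thm:twoapprox}: it consists of boundary chains and geodesics between consecutive vertices.
\item Compute each $D_{X_i}$, now a shortest floating route seeing the prescribed convex vertices, in $\mathcal{O}(n^4)$ time by adapting the floating watchman algorithm to require visiting only the visibility polygons of $V_C\setminus X_i$.
\item Compute each relative convex hull in $\mathcal{O}(n\log n)$ time or better.
\end{itemize}
The number of partitions is $\sum_{m\le k}N(m,t)=\mathcal{O}(t^{2k-2})$, and I treat $k$ as a constant. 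The total is therefore $\mathcal{O}(t^{2k-2}\cdot n^4)$. Enumerating non-crossing partitions with polynomial delay is standard.

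The main obstacle is justifying that $D_{X_i}$ can be computed in $\mathcal{O}(n^4)$ time. The existing algorithm~\cite{TanJ17} handles all essential cuts of the polygon. Here we must restrict the cuts to the visibility regions of a subset of convex vertices. By the argument of \Cref{clm:cond2}, seeing a convex vertex amounts to reaching the corresponding cut. So I would reduce to the same framework with a subset of cuts, and note that the algorithm's complexity does not increase when cuts are removed.
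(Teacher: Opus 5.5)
Your plan matches the paper's proof. Feasibility comes from \Cref{lem:conditions-k} together with \Cref{obs:RCH-is-valid}, the factor comes from \Cref{lem:twoapprox-min-sum-k}, and the running time is $\mathcal{O}(t^{2k-2})$ non-crossing partitions times $\mathcal{O}(n^4)$ per floating route $D_{X_i}$. Replacing the padding route $D_{\Pol}$ by $\H(D_{\Pol})$ to secure Condition~(iii$_k$) is a harmless detail that the paper leaves implicit.

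The one place you go beyond the paper is where the argument breaks. The paper only asserts that $D_{X_i}$ can be computed in $\mathcal{O}(n^4)$ time. Your justification is that seeing a convex vertex means reaching ``the corresponding cut'', so one can run the known algorithm on a subset of the cuts; this is not correct. A route is connected, so it sees a convex vertex $v$ if and only if, for every window of the visibility polygon of $v$, it meets the subpolygon on $v$'s side of that window. These windows are chords that start at reflex vertices and lie on rays from $v$. In general they are not among the polygon's edge-extension cuts.

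For a concrete case, take an L-shaped polygon with reflex vertex $r$. Let $v$ be an outer corner at the end of one arm that lies on neither line supporting an edge at $r$. The visibility polygon of $v$ has a single window: the chord from $r$ along the ray from $v$ through $r$. No subset of the two edge-extension cuts at $r$ is equivalent to the requirement of seeing $v$.

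So $D_{X_i}$ is a shortest route visiting a different family of half-polygons, with up to linearly many per vertex of $V_C\setminus X_i$. You would need a separate argument that a cut-based algorithm handles this family within $\mathcal{O}(n^4)$; ``removing cuts'' from the standard instance does not give it.
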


\subsection{\NP-Hardness Results for \texorpdfstring{$\bm{k}$}{$k$}-Hull Watchman Routes}

By leveraging similar arguments as in the proof of~\cref{thm:segment-hardness}, we straightforwardly obtain \NP-hardness for the min-max variant of our problem with $k$ watchmen by reducing from \textsc{Multiway Number Partitioning} instead of \textsc{Partition}.

\begin{corollary}\label{cor:k-gon-min-max-hardness}
    The \textsc{Min-Max $k$-hull Watchman Routes Problem} is \NP-hard, even in simple polygons.
\end{corollary}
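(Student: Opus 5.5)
We generalize the reduction from the proof of \cref{thm:segment-hardness}, now reducing from \textsc{Multiway Number Partitioning}: given $\varphi=\{\alpha_1,\dots,\alpha_n\}$ with total sum $T$, decide whether $\varphi$ can be split into $k$ subsets of sum $T/k$ each. For fixed $k$ this problem is weakly \NP-hard, which is all the claimed statement needs.

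\emph{Construction.} We take the same star-shaped polygon $\Pol_\varphi$ as before. The bottom vertex $v_0$ lies in the kernel, and each $\alpha_i$ yields a thin spike with convex tip $v_{2i}$ and two sides of length $\alpha_i/2$. The spike width is $O(\varepsilon)$, where $\varepsilon$ is the length of the tour through $v_0$ and the odd-indexed vertices. We scale so that $\varepsilon$ is negligible against $\min_i\alpha_i$; concretely $k\varepsilon<\min_i \alpha_i$ suffices, so that rounding errors cannot absorb a whole spike. The threshold is $T/k+\varepsilon$.

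\emph{Completeness.} Given a partition $\varphi_1,\dots,\varphi_k$, route $j$ starts at $v_0$, enters each spike of $\varphi_j$ in increasing index order, and returns to $v_0$. Its length is at most $T/k+\varepsilon$. For feasibility we appeal to \cref{lem:conditions-k}:
\begin{itemize}
\item Condition (\ref{cond1-k}$_k$) holds because the spike tips are covered by the partition, and the remaining convex vertices ($v_0$ and possibly $v_1,v_{2n+1}$) can be visited by some route at no significant cost.
\item Condition (\ref{cond2-k}$_k$) holds because every route passes through $v_0$, which lies in the kernel and therefore sees every convex vertex.
\item Condition (\ref{cond3-k}$_k$) holds because each route, being the boundary of a union of spikes glued along the kernel region, can be taken relatively convex by replacing it with its relative convex hull, which does not lengthen it.
\end{itemize}
If a route happens to be empty, that is, its part of the partition is empty, we let it be the point $v_0$, which already sees everything.

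\emph{Soundness.} By \cref{obs:watchmen-routes} and the $k$-hull analogue of \cref{obs:optimal-segment-1-2}, every convex vertex must be visited by some route. The argument is the same as before: a convex tip lies in the convex hull of the guards only if it is one of them, since its interior angle is below $180^\circ$. So each spike tip $v_{2i}$ is visited by some $W_j$, and visiting it costs essentially $\alpha_i$, because the route must enter and leave the thin spike. Assigning each spike to a route that visits it gives a partition in which every class has sum at most $T/k+\varepsilon$. Since the total is $T$ and $k\varepsilon<\min_i\alpha_i$ while the $\alpha_i$ are integers (we scale so that $\varepsilon<1/k$), every class sums to exactly $T/k$.

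The main obstacle is this soundness accounting. We must lower-bound the cost of a route visiting a set of spikes by the sum of their $\alpha_i$ up to $O(\varepsilon)$. We also need the constructive routes to genuinely satisfy relative convexity, so that \cref{lem:conditions-k} applies. For the lower bound we would project the route onto each spike's axis: visiting the tip forces a traversal of length at least $\alpha_i/2$ in and out, up to $O(\varepsilon)$.
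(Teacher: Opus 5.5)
Your proposal is correct and follows the paper's approach. The paper obtains this corollary only by remarking that the spike construction from the proof of \cref{thm:segment-hardness} carries over when one reduces from \textsc{Multiway Number Partitioning} instead of \textsc{Partition}, and your version spells this out in more detail, including feasibility via \cref{lem:conditions-k} and the integrality argument; when choosing how small $\varepsilon$ must be, budget for the $O(\varepsilon)$ error in your length lower bound as well as the threshold slack (e.g.\ $\varepsilon<1/(2k)$ rather than $\varepsilon<1/k$).
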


Moreover, we extend the proof of~\cref{thm:min-sum-hardness} by constructing $k$ routes, each of which visits every hole and every corner niche, and traverses the convex hull of each hole. 
Additionally, for every corner niche and every hole, at least one of these $k$ routes visits all convex vertices. 
This straightforwardly establishes:
\begin{corollary}
    \label{thm:min-sum-hardness-k}
    The \textsc{Min-Sum $k$-hull Watchman Routes Problem} is \NP-hard in polygons with holes for $k \geq 2$.
\end{corollary}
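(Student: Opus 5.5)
We prove the corollary by extending the reduction of \Cref{thm:min-sum-hardness} from Euclidean TSP. Given a point set $\mathcal{P}$, build exactly the same polygonal domain: a corner niche for each hull point of $\mathcal{P}$, a ten-vertex hole with a narrow door for each interior point, and the same parameters $\Delta$ and $\delta \ll \Delta/|\mathcal{P}|$. The aim is to show that an optimal min-sum $k$-hull solution has total length strictly between $k\cdot T_{OPT}$ and $k\cdot T_{OPT}+|\mathcal{P}|\,k\,\delta$. Contracting any one of its routes then yields an optimal TSP tour.

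\emph{Feasible solution.} Take $W_1,\dots,W_k$ to be the TSP-derived route $W_2$ from the proof of \Cref{thm:min-sum-hardness}: it enters every hole door, circles the convex hull of every hole, and passes every niche's essential cuts. Extend $W_1$ as before so that it visits all convex vertices of every niche and of the inner part of every hole. Each route is then a watchman route. We check $k$-hull guarding of a point $p$ in two cases, following the proof of \Cref{lem:conditions-k}.
\begin{itemize}
\item If $p$ lies on or inside some route $W_j$, its wedge $F_j$ covers $360^\circ$. The points $w_i$ for $i\neq j$ that see $p$ exist, and a suitable point of $W_j$ on the opposite side, between the rays $\rho^u$ and $\rho^\ell$, completes a hull containing $p$.
\item If $p$ lies outside all routes, use the parallel-line argument from the proof of \Cref{thm:min-sum-hardness}. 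The line $\ell_p$ meets every route on one side of $p$ and at least one route on the other side. Since all routes essentially coincide near $p$ except for the extensions of $W_1$, it should meet all routes on both sides. This gives points of each route on both sides of $p$, hence $w_i$'s whose convex hull contains $p$.
\end{itemize}
The total length is at most $k(T_{OPT}+|\mathcal{P}|\delta) < k\,T_{OPT}+\Delta$, choosing $\delta\ll\Delta/(k|\mathcal{P}|)$.

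\emph{Lower bound and extraction.} By \Cref{obs:watchmen-routes}, which also holds for $k$-hull routes, every route of any feasible solution is a watchman route. It must therefore pass every hole door and every niche's essential cuts, which lie within distance $O(\delta)$ of the corresponding point of $\mathcal{P}$. Contracting these detours gives a closed tour through all of $\mathcal{P}$, so each route has length at least $T_{OPT}-O(|\mathcal{P}|\delta)$. Suppose some route of an optimal solution induced a non-optimal visiting order. That route would have length at least $T_2-O(|\mathcal{P}|\delta)\ge T_{OPT}+\Delta-O(|\mathcal{P}|\delta)$. The remaining $k-1$ routes each have length at least $T_{OPT}-O(|\mathcal{P}|\delta)$. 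The total would then exceed the feasible bound above, a contradiction. Hence every route of an optimal solution encodes $T_{OPT}$, and computing it solves TSP.

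\emph{Main obstacle.} The hardest step is verifying the $k$-hull guarding of points outside all routes, near niches and inside hole chambers. For $k\ge3$, a single opposite-side witness is not enough: points of all $k$ routes must surround $p$. I would first try placing the extra convex-vertex detours on every route rather than only on $W_1$. This makes all routes near-identical and reduces the check to the segment case, at the cost of a factor $k$ in the $\delta$-slack, which the choice of $\delta$ absorbs.
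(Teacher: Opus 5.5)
Your reduction is the paper's: the same domain, one route that carries all the convex-vertex detours at the niches and hole chambers, the other $k-1$ routes equal to the plain TSP-derived route, and extraction by comparing lengths. Your total-length bookkeeping with $\delta\ll\Delta/(k|\mathcal{P}|)$ is actually a cleaner version of the paper's per-route inequality.

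The step you single out as the main obstacle, however, is not one. The reason is exactly what makes the paper's one-line argument work: $k$-hull guarding is monotone. Suppose $p\in\overline{w_1w_2}$ with $w_1\in W_1$ and $w_2\in W_2$ both seeing $p$. Then $p$ lies in the convex hull of $w_1,\dots,w_k$ for \emph{any} points $w_3,\dots,w_k$ that see $p$, and such points exist because every route is a watchman route. So your claim is false that for $k\ge 3$ a single opposite-side witness is not enough and points of all $k$ routes must surround $p$. The first sentence of your second bullet already settles that case: every route is met on one side of $p$ and at least one on the other, so take $w_j$ on the far side and all other $w_i$ on the near side. More directly, feasibility of $(W_1,W_2,W_2,\dots,W_2)$ follows verbatim from the feasibility of $(W_1,W_2)$ shown in the proof of \Cref{thm:min-sum-hardness}. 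You therefore need neither the unproven claim that $\ell_p$ ``should meet all routes on both sides'' nor the wedge argument of \Cref{lem:conditions-k}, which was proved only for simple polygons. Putting detours on every route is harmless but unnecessary; this is why the paper only asks that \emph{at least one} route visit all convex vertices of each niche and hole.

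One caveat applies, which you inherit from the proof of \Cref{thm:min-sum-hardness} rather than introduce. The inequality $T_2\ge T_{OPT}+\Delta$ does not follow from $\Delta$ being the smallest difference among pairwise distances. Two different tours differ in at least two edges. A difference of sums such as $(a+b)-(c+d)$ can be arbitrarily small, or even zero, although every single difference of distances is at least $\Delta$. Your extraction step, like the paper's, needs a genuine separation between the optimal tour length and every non-optimal one. One way to get it is to reduce from Hamiltonian cycle in grid graphs, adapting the construction to such point sets. There a tour has length exactly $N$ or at least $N-1+\sqrt{2}$, so you can take $\delta\ll(\sqrt{2}-1)/(k|\mathcal{P}|)$. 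Without such a gap, the reduction only certifies a tour within an additive $O(k|\mathcal{P}|\delta)$ of optimal.
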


\section{Conclusions and Future Work}
In this paper, we studied segment watchman routes in simple polygons.
First, we identified sufficient conditions for two routes to constitute valid segment watchman routes. 
Leveraging these conditions, we then designed a $2$-approximation algorithm for segment watchman routes for both the min-max and min-sum optimization objectives.

Moreover, we establish both the 
\NP-hardness of the \textsc{Min-Max Segment Watchman Routes Problem} even in simple polygons, as well as the \NP-hardness of the \textsc{Min-Sum $k$-hull Watchman Routes Problem} in polygons with holes, highlighting the inherent computational difficulty of the problem.

Finally, we extended all our results to $k$-hull watchman routes, obtaining  a $k$- and a $2$-approximation algorithm for $k$-hull watchman routes for the min-max and min-sum optimization objectives, respectively.

\subparagraph*{Interior \texorpdfstring{$\bm{k}$}{$k$}-Hulls.}
In the problems considered in this paper, the $k$-hull (for all $k>2$) defined by the points on the watchman routes that see a point $p\in P$ is not required to be contained in~$P$. 
One can naturally define variants of the two problems in which this additional containment condition is imposed. 
However, the conditions of~\cref{lem:conditions-k} are not sufficient in this setting, see~\cref{fig:no-int-tri}: the three routes satisfy all conditions of the lemma.
\begin{figure}[h]
    \centering
    \includegraphics[page=1, scale=0.9]{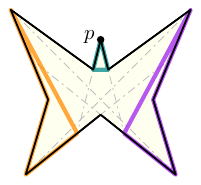}
    \caption{An example for the conditions from~\cref{lem:conditions-k} not being sufficient for interior-$k$-hull watchman routes.}
    \label{fig:no-int-tri}
\end{figure}
The point $p$ is seen from exactly one point on each of the orange and violet routes. 
As $p$ is a convex vertex, the corresponding point on the green route must be $p$ itself. 
Since the triangle determined by these three points intersects the boundary of \Pol, it is not fully contained in the polygon.
Finding sufficient conditions for this variant, as well as developing approximation algorithms, remains an open problem.

\bibliography{bibliography}

\end{document}